\DeclareMathOperator{\Tr}{Tr}
\DeclareMathOperator{\polylog}{polylog}
\DeclareMathOperator{\algL}{\{\mathcal{L}\}_{c}}
\newcommand\norm[1]{\left\lVert#1\right\rVert}
\newcommand\dnorm[1]{\left\lVert#1\right\rVert_{\diamond}}
\newtheorem{theorem}{Theorem}
\newtheorem*{theorem*}{Theorem}
\newtheorem{corollary}{Corollary}[theorem]
\newtheorem{lemma}[theorem]{Lemma}
\newtheorem{definition}{Definition}
\begin{document}

\title{Quantum simulation algorithms based on quantum trajectories}% Force line breaks with \\
\author{Evan Borras}
\affiliation{Center for Quantum Information and Control, University of New
Mexico, Albuquerque, NM 87131, USA}
\affiliation{Department of Physics and Astronomy, University of New Mexico,
Albuquerque, NM 87131, USA}
\author{Milad Marvian}
\affiliation{Center for Quantum Information and Control, University of New
Mexico, Albuquerque, NM 87131, USA}
\affiliation{Department of Physics and Astronomy, University of New Mexico,
Albuquerque, NM 87131, USA}
\affiliation{Department of Electrical \& Computer Engineering, University of New Mexico, Albuquerque, NM 87131, USA}

\begin{abstract}
Quantum simulation has emerged as a key application of quantum computing, with significant progress made in algorithms for simulating both closed and open quantum systems. 
The simulation of open quantum systems, particularly those governed by the Lindblad master equation, has received attention recently with the current state-of-the-art algorithms having an input model query complexity of $O(T\polylog(T/\epsilon))$, where $T$ and $\epsilon$ are the desired time and precision of the simulation respectively. For the Hamiltonian simulation problem it has been show that the optimal Hamiltonian query complexity is $O(T + \log(1/\epsilon))$, which is additive in the two parameters, but for Lindbladian simulation this question remains open. In this work we show that the additive query complexity to a Lindbladian's jump operators is reachable for the simulation of a large class of Lindbladians by constructing a novel quantum algorithm based on quantum trajectories.
%In addition, for \emph{purely} dissipative Lindbladians of our restricted class, this query complexity is optimal with respect to the simulation time.
\end{abstract}

\maketitle

\section{Introduction}

Quantum simulation has been considered an important application for quantum computation since the idea was first proposed in \cite{feynman1982simulating}. Over the years there as been an explosion of progress in designing quantum algorithms to simulate both closed quantum systems \cite{berry2007efficient,berry2013gateefficient,berry2014exponential,berry2015hamiltonian,berry2015simulating,campbell2019random,childs2019fasterquantum,childs2021theory,lloyd1996universal,low2017optimal,low2019hamiltonian,nakaji2024high,poulin2011quantum} and open quantum systems \cite{dibartolomeo2023efficient,chen2024randomized, childs2017efficient, cleve2019efficient, david2024faster,guimares2023noise,guimares2024optimized,hu2020a,joo2023commutation,kliesch2011dissipative,li2023simulating,li2023succinct,liu2024simulation,peng2024quantum,schlimgen2021quantum,schlimgen2022quantumsimulation,schlimgen2022quantumstate,suri2023twounitary,borras2025quantum,pocrnic2024quantum}. Such algorithms could have useful future applications in fields ranging from chemistry to materials science given the ubiquity of the quantum simulation problem. The importance of such applications is emphasized by the circumstance that current classical algorithms are inefficient, as well as hardness results regarding the problem of quantum simulation \cite{verstraete2009quantum}. 

Historically, quantum algorithms initially addressed the simulation of closed quantum systems, despite the fact that realistic quantum systems are, to some extent, open to an unknown environment. Quantum algorithms focused on simulating open quantum systems were explored much later with a bulk of the work focused on simulating systems described by the Lindblad master equation \cite{dibartolomeo2023efficient,chen2024randomized,childs2017efficient,cleve2019efficient,david2024faster,guimares2023noise,guimares2024optimized,hu2020a,joo2023commutation,kliesch2011dissipative,li2023simulating,liu2024simulation,peng2024quantum,schlimgen2022quantumsimulation,borras2025quantum,pocrnic2024quantum}.  The full Lindblad master equation given by
\begin{equation}
    \frac{d\rho}{dt} = -i[H, \rho] + \sum_{\mu=1}^{m}\left(L_{\mu}\rho L_{\mu}^{\dagger} - \frac{1}{2}\{L_{\mu}^{\dagger}L_{\mu}, \rho\}\right), 
\end{equation}
describes the evolution of a system coupled to a Markovian environment. The effects of the closed system dynamics are model by $-i[H, \rho]$ where $H$ is the effective Hamiltonian on the system, while the additional effects the environment has on the system are modeled by the jump operators $\{L_{\mu}\}_{\mu=1}^{m}$. Simulating dynamics generated by the Lindblad master equation, or for short Lindbladian simulation, can be seen as the natural first step into designing quantum algorithms to simulate more general open quantum systems 

For most quantum simulation problems one asks, given access to the generator of the dynamics and the time of simulation $T$, what is the minimum number of queries needed to the generator, below of which the simulation is not possible. Such results are typically called no-fast-forwarding theorems, given that the lower-bound is usually $\Omega(T)$, implying one cannot simply ``fast-forward" the quantum simulation. No-fast-forwarding results have been shown for the tasks of closed system simulation \cite{berry2013gateefficient,berry2015hamiltonian}, commonly called Hamiltonian simulation, and Lindbladian simulation in \cite{childs2017efficient,ding2024lower}. Similarly, one may also ask a related question regarding the simulation's precision $\epsilon$. For the task of Hamiltonian simulation it has been show in \cite{berry2014exponential} and \cite{berry2015hamiltonian} that the lower-bound with respect to $\epsilon$ is $\Omega(\frac{\log(1/\epsilon)}{ \log\log(1/\epsilon)})$. Naturally, this suggests an optimal lower-bound of $\Omega(T + \frac{\log(1/\epsilon)}{\log\log(1/\epsilon)})$, additive with respect to the two results, although does not rule out a multiplicative optimal query complexity. For Hamiltonian simulation this question remained open until the result of Low et al. \cite{low2017optimal} showing that the optimal Hamiltonian query complexity is additive. A similar narrative has played out for Lindbladian simulation. No-fast-forwarding results have been show for queries to the Lindbladian's jump operators and algorithms have been constructed having a jump operator query complexity scaling with respect to $\epsilon$ as $O(\frac{\log(1/\epsilon)}{\log\log(1/\epsilon)})$, assuming $T = O(1)$. The current state-of-the-art Lindbladian simulation algorithms \cite{cleve2019efficient,li2023simulating,peng2024quantum} have both a Hamiltonian and jump operator query complexity of $O(T\frac{\log(T/\epsilon)}{ \log\log(T/\epsilon)})$.

In this paper we provide evidence that the additive lower-bound is in-fact reachable for Lindbladian simulation. More specifically, for the Lindbladians satisfying $\sum_{\mu=1}^{m}L_{\mu}^{\dagger}L_{\mu}\propto \mathbb{I}$,  we construct a quantum algorithm which solves the Lindbladian simulation problem and achieves a query complexity of $O(T + \frac{\log(1/\epsilon)}{\log\log(1/\epsilon)})$ to the Lindbladian's jump operators. 
%For \emph{purely} dissipative Lindbladians that satisfy this restriction our algorithm obtains the optimal query complexity. 
For \emph{purely} dissipative Lindbladians that satisfy this restriction, our algorithm obtains a query complexity that is optimal with respect to the time of simulation.

The approach we take in constructing the algorithm is heavily inspired by the quantum trajectories/Monte Carlo wavefunction approach towards classically simulating the Lindblad master equation \cite{breuer2007the}. Simulating the quantum trajectories as opposed to the full Lindblad master equation allows us to exploit the fact that quantum jumps are timed by a Poisson clock, occurring infrequently with an average number of jumps per trajectory being $O(T)$. This allows us to reduce the number of queries to the Lindbladian's jump operators and is the core reason behind the additive query complexity.

In addition, we provide a characterization of the set of Lindbladians which satisfy our algorithm's constraint. The results we provide suggest that extending our algorithm to Lindbladians outside this set will require more than a naive black-box approach.

The paper is divided into four sections, with section~\ref{sec:preq} being a preliminaries section where we introduce some notation, algorithmic tools, and review the quantum jump/Monte Carlo wavefunction unraveling of the Lindblad master equation. In section~\ref{sec:newalg} we introduce the new algorithm and provide its resource analysis. In the section~\ref{sec:char} which follows, we provide a short argument regarding our algorithm's optimality with respect to the time of simulation $T$ for the restricted class of Lindbladian dissipators we consider, along with a brief analysis and characterization of this restricted class. Finally, we conclude with section~\ref{sec:disc} summarizing and discussing our results. 

\section{Prerequisites}
\label{sec:preq}
We begin by defining some notation. We will be working with systems of $n$-qubits implying that general quantum states $\rho$ are given by density operators, trace normalized positive linear operators acting on $\mathbb{C}^{2^{n}}$. Pure quantum states will either be denoted as rank one density operators or normalized vectors in $\mathbb{C}^{2^{n}}$. We will use the term superoperator to refer to a linear map acting on the set of bounded linear operators over $\mathbb{C}^{2^{n}}$. Thus quantum channels are superoperators which are completely positive and trace preserving. The sets of completely positive trace preserving superoperators and unital completely positive superoperators will be denoted $\mathbf{CPTP}$ and $\mathbf{UCP}$ respectively. The identity superoperator will be denoted as $\mathcal{I}$.

The operator norm of a linear operator will be denoted as $\norm{A}$. Similarly, the trace norm of a linear operator $A$ is defined as $\norm{A}_{1} = \mathrm{Tr}(\sqrt{A^{\dagger}A})$. Given a superoperator $\mathcal{L}$, the diamond norm of $\mathcal{L}$ is given by $\dnorm{\mathcal{L}} = \sup_{\rho}\norm{\mathcal{I}\otimes \mathcal{L}(\rho)}_{1}$ where the supremum is taken over all density operators $\rho$. The diamond norm of superoperators induces a diamond distance between superoperators given by $d_{\diamond}(\mathcal{E}, \mathcal{L}) = \frac{1}{2}\dnorm{\mathcal{E} - \mathcal{L}}$. We will also be using the total variational distance between probability distributions defined for continuous variables as $\delta(p,f) = \frac{1}{2}\int dx |p(x) - f(x)|$ and discrete variables as $\delta(p, f) = \frac{1}{2}\sum_{N} |p_{N} - f_{N}|$.

Finally we denote the total simulation time requested to be $T$ and the precision of the simulation to be $\epsilon$. We analyze the precision of the simulation with respect to the diamond distance given that it not only provides a measure of distance between quantum channels, it also is used to upperbound the single shot probability of distinguishing between quantum channels \cite{watrous_2018}. Thus the diamond distance provides a relevant operational measure of similarity between quantum channels. In the sections to follow we introduce some tools and techniques we will use in building our algorithm as well as review the quantum jump unraveling of the Lindblad master equation \cite{breuer2007the}.

\subsection{Algorithmic Tools}

As input, our algorithm needs some way to access the Lindbladian of the system we want to simulate. We choose to provide our algorithm access to the system's Lindbladian through the use of block-encodings of the system Lindbladian's jump operators $\{L_{\mu}\}_{\mu=1}^{m}$ and Hamiltonian $H$. Intuitively, the technique of block encoding gives a quantum algorithm access to an arbitrary linear operator $A$, that may or may not be unitary, through encoding it in the upper-left block of a unitary as follows
\begin{equation}
    U(A) = \begin{bmatrix}
        A & * \\
        * & * 
    \end{bmatrix}.
\end{equation}
The $*$ entries are arbitrary block matrices needed to make the overall $U(A)$ unitary. More formally we define the notion of block-encoding below.
\begin{definition}
    Let $A$ be an $n$-qubit operator, $\alpha \in \mathbb{R}_{+} \setminus \{0\}$ and $a \in \mathbb{N}$. We call an $(n + a)$-qubit unitary $U(A)$ an $(\alpha, a, \epsilon)$-block-encoding of $A$ if 
    \begin{equation}
        \norm{A - \alpha(\bra{0}^{\otimes a} \otimes \mathbb{I}^{\otimes n})U(A)(\ket{0}^{\otimes a}\otimes \mathbb{I}^{\otimes n})} \leq \epsilon
    \end{equation}
\end{definition}

It is well known that an aribitrary superoperator $\mathcal{M}$ which is a completely positive map can be represented in operator sum notation as $\mathcal{M}(\rho) = \sum_{\mu} A_{\mu} \rho A_{\mu}^{\dagger}$ where the linear operators $A_{\mu}$ are typically called Kraus operators \cite{Nielsen_Chuang_2010}. Given access to block-encodings $U(A_{\mu})$ for each of the operators $A_{\mu}$, the circuit construction and Lemma below, describe a method for implementing the map $\mathcal{M}$ on an input quantum state $\rho$ with some probability of failure.
\begin{figure}[h]
    \centering
    \begin{quantikz}
        \lstick{$\ket{0}$} & \qwbundle{\log(m)} & & \gate{B} & \ctrl{1} & & \\
        \lstick{$\ket{0}$} & \qwbundle{a} & & & \gate[2]{U(A_{\mu})} &  &\\
        \lstick{$\rho$} & \qwbundle{n} & & & & & \\ 
    \end{quantikz}
    \caption{Implementing CP-maps with block encodings}
    \label{fig:cpbc}
\end{figure}
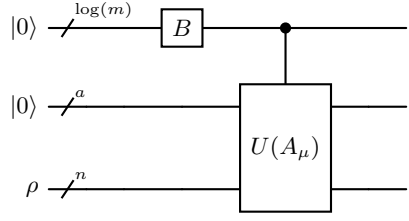
\begin{lemma}[Implementing CP-maps with  block-encodings \cite{li2023succinct}]
    \label{lm:cpbc}
    Let $\{A_{\mu}\}_{\mu=1}^{m}$ be the Kraus operators of a completely positive map $\mathcal{M}$,
    \begin{equation}
        \mathcal{M}(\rho) = \sum_{\mu=1}^{m}A_{\mu} \rho A_{\mu}^{\dagger}.
    \end{equation}
    Let $U(A_{\mu})$ be the corresponding $(\alpha_{\mu}, a, \epsilon)$-block-encoding for Kraus operator $A_{\mu}$. Define oracle $B$ and state $\ket{\omega}$  such that
    \begin{equation}
        \ket{\omega} = \frac{1}{\sqrt{\sum_{\mu=1}^{m}\alpha_{\mu}^{2}}}\sum_{\mu=1}^{m}\alpha_{\mu}\ket{\mu} = B\ket{0}
    \end{equation}
    Then the circuit depicted in FIG~\ref{fig:cpbc} implements the map $\mathcal{M}$ in the sense that for all $\ket{\psi}$ 
    \begin{align}
        \bigg\lVert \mathbb{I} \otimes \bra{0} \otimes \mathbb{I} \left(\sum_{\mu=1}^{m}\ket{\mu}\bra{\mu}\otimes U(A_{\mu})\right)&(B \otimes \mathbb{I} \otimes \mathbb{I})\ket{0} \otimes \ket{0} \otimes \ket{\psi} - \frac{1}{\sqrt{\sum_{\mu=1}^{m}\alpha_{\mu}^{2}}}\sum_{\mu=1}^{m}\ket{\mu}\otimes A_{\mu}\ket{\psi} \bigg\rVert \nonumber \\
        &\leq \frac{m\epsilon}{\sqrt{\sum_{\mu=1}^{m}\alpha_{\mu}^{2}}}.
    \end{align}
\end{lemma}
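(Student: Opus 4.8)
The plan is to trace the input state directly through the circuit in FIG~\ref{fig:cpbc}, invoke the block-encoding guarantee operator by operator, and then control the residual error with a triangle inequality. First I would compute the circuit's action on $\ket{0}\otimes\ket{0}\otimes\ket{\psi}$. Applying $B$ to the first register yields $\ket{\omega}\otimes\ket{0}\otimes\ket{\psi}=\frac{1}{\sqrt{\sum_{\nu}\alpha_\nu^2}}\sum_\mu\alpha_\mu\ket{\mu}\otimes\ket{0}\otimes\ket{\psi}$, and the controlled operation $\sum_\mu\ket{\mu}\bra{\mu}\otimes U(A_\mu)$ then acts coherently, sending this to $\frac{1}{\sqrt{\sum_{\nu}\alpha_\nu^2}}\sum_\mu\alpha_\mu\ket{\mu}\otimes U(A_\mu)(\ket{0}\otimes\ket{\psi})$. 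Finally, the projection $\mathbb{I}\otimes\bra{0}\otimes\mathbb{I}$ onto the $a$-qubit ancilla isolates the upper-left block of each $U(A_\mu)$, producing $\frac{1}{\sqrt{\sum_{\nu}\alpha_\nu^2}}\sum_\mu\alpha_\mu\ket{\mu}\otimes(\bra{0}\otimes\mathbb{I})U(A_\mu)(\ket{0}\otimes\ket{\psi})$.

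The key step is to bring in the block-encoding definition. Writing $\tilde{A}_\mu:=\alpha_\mu(\bra{0}\otimes\mathbb{I})U(A_\mu)(\ket{0}\otimes\mathbb{I})$, the definition guarantees $\norm{A_\mu-\tilde{A}_\mu}\leq\epsilon$ for every $\mu$. Substituting, the projected output becomes $\frac{1}{\sqrt{\sum_{\nu}\alpha_\nu^2}}\sum_\mu\ket{\mu}\otimes\tilde{A}_\mu\ket{\psi}$, where the per-operator normalization $\alpha_\mu$ supplied by the block-encoding cancels cleanly against the $\alpha_\mu$ weight coming from $B$. Subtracting the target state $\frac{1}{\sqrt{\sum_{\nu}\alpha_\nu^2}}\sum_\mu\ket{\mu}\otimes A_\mu\ket{\psi}$ then leaves exactly $\frac{1}{\sqrt{\sum_{\nu}\alpha_\nu^2}}\sum_\mu\ket{\mu}\otimes(\tilde{A}_\mu-A_\mu)\ket{\psi}$, so the whole discrepancy is carried by the individual block-encoding errors.

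To conclude I would bound the norm of this difference. The most direct route applies the triangle inequality across the $m$ terms, giving a bound $\frac{1}{\sqrt{\sum_{\nu}\alpha_\nu^2}}\sum_\mu\norm{(\tilde{A}_\mu-A_\mu)\ket{\psi}}\leq\frac{m\epsilon}{\sqrt{\sum_{\nu}\alpha_\nu^2}}$, using $\norm{\tilde{A}_\mu-A_\mu}\leq\epsilon$ together with $\norm{\ket{\psi}}=1$, which is precisely the claimed inequality. I would also remark that exploiting the orthonormality of the $\ket{\mu}$ actually yields the strictly tighter bound $\frac{\sqrt{m}\,\epsilon}{\sqrt{\sum_{\nu}\alpha_\nu^2}}$, so the stated estimate is comfortably implied. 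There is no deep obstacle in this argument; the only thing demanding care is the bookkeeping over the three registers and correctly matching the projected upper-left block with the block-encoding error guarantee, so that the $\alpha_\mu$ factors cancel and the single-operator error $\epsilon$ propagates additively into the coherent sum rather than compounding in some uncontrolled way.
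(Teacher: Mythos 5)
Your proof is correct. Note that the paper itself offers no proof of this lemma --- it is imported verbatim from the cited reference \cite{li2023succinct} --- so there is nothing internal to compare against, but your argument (trace the state through $B$, the select operation, and the ancilla projection; absorb the $\alpha_{\mu}$ weights into the block-encoded operators $\tilde{A}_{\mu}$; bound $\norm{\tilde{A}_{\mu}-A_{\mu}}\leq\epsilon$ termwise) is exactly the standard one this kind of statement admits, and your bookkeeping of the three registers and the $\alpha_{\mu}$ cancellation is right. Your side remark is also valid: orthonormality of the $\ket{\mu}$ gives $\norm{\sum_{\mu}\ket{\mu}\otimes(\tilde{A}_{\mu}-A_{\mu})\ket{\psi}}=\bigl(\sum_{\mu}\norm{(\tilde{A}_{\mu}-A_{\mu})\ket{\psi}}^{2}\bigr)^{1/2}\leq\sqrt{m}\,\epsilon$, so the lemma actually holds with the tighter constant $\sqrt{m}\,\epsilon/\sqrt{\sum_{\mu}\alpha_{\mu}^{2}}$, and the stated $m\epsilon$ bound is simply a looser triangle-inequality estimate.
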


The final technique our algorithm will leverage is oblivious amplitude amplification. Given access to some target unitary/isometry through the use of a unitary $W$ implementing the target operation with some probability of failure, oblivious amplitude amplification gives us a method to boost the success probability of implementing the target operation. Thus, using oblivious amplitude amplification we can turn a faulty implementation of some target operation into a non-faulty implementation. More formally the technique is described by the Lemma below.
\begin{lemma}[Oblivious amplitude amplification for isometries \cite{li2023succinct}]
    \label{lm:oaa}
    For any $a, b \in \mathbb{N}_{+}$, let $\ket{\hat{0}} = \ket{0}^{\otimes a}$ and $\ket{\hat{\mu}} = \ket{\mu}^{\otimes b}$ for an arbitrary state $\ket{\mu}$.  For any $n$-qubit state $\ket{\psi}$, define $\ket{\hat{\psi}} = \ket{\hat{0}}\otimes \ket{\hat{\mu}} \otimes \ket{\psi}$. Let the target state $\ket{\hat{\phi}}$ be defined as 
    \begin{equation}
        \ket{\hat{\phi}} = \ket{\hat{0}}\otimes \ket{\phi}
    \end{equation}
    where $\ket{\phi}$ is a $(b + n)$-qubit state. Let $P_{0} = \ket{\hat{0}}\bra{\hat{0}}\otimes \mathbb{I}\otimes\mathbb{I}$ and $P_{1} = \ket{\hat{0}}\bra{\hat{0}}\otimes \ket{\hat{\mu}}\bra{\hat{\mu}} \otimes \mathbb{I}$ be two projectors. Suppose there exists an operator $W$ such that 
    \begin{equation}
        W\ket{\hat{\psi}} = \sin(\theta)\ket{\hat{\phi}} + \cos(\theta)\ket{\hat{\phi^{\perp}}},
    \end{equation}
    for some $\theta \in [0, \pi / 2 ]$ with $\ket{\hat{\phi^{\perp}}}$ satisfying $P_{0}\ket{\hat{\phi^{\perp}}}= 0$. Then it holds that 
    \begin{equation}
        -W(\mathbb{I} - 2P_{1})W^{\dagger}(\mathbb{I} - 2P_{0})(\sin(\gamma)\ket{\hat{\phi}} + \cos(\gamma)\ket{\hat{\phi^{\perp}}}) = \sin(\gamma + 2\theta)\ket{\hat{\phi}} + \cos(\gamma + 2\theta)\ket{\hat{\phi^{\perp}}},
    \end{equation}
    for all $\gamma \in [0, \pi / 2]$. 
\end{lemma}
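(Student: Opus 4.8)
The plan is to reduce the action of the Grover-type operator $G = -W(\mathbb{I} - 2P_1)W^\dagger(\mathbb{I} - 2P_0)$ to a rotation inside the two-dimensional real subspace $S = \mathrm{span}\{\ket{\hat{\phi}}, \ket{\hat{\phi^\perp}}\}$, exactly as in the standard analysis of amplitude amplification as a product of two reflections. First I would record the elementary fact that each factor is a reflection: since $P_0^2 = P_0$ and $P_1^2 = P_1$, both $\mathbb{I} - 2P_0$ and $\mathbb{I} - 2P_1$ square to the identity, and conjugating the latter by $W$ preserves this, so $-W(\mathbb{I} - 2P_1)W^\dagger$ is again an involution. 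The product of two reflections is a rotation, and the whole content of the lemma is to identify the plane in which it acts and the rotation angle $2\theta$.

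Next I would fix an orthonormal basis of $S$ adapted to $W$. Writing $\ket{v} = W\ket{\hat{\psi}} = \sin\theta\ket{\hat{\phi}} + \cos\theta\ket{\hat{\phi^\perp}}$ and its orthogonal partner $\ket{v^\perp} = \cos\theta\ket{\hat{\phi}} - \sin\theta\ket{\hat{\phi^\perp}}$, the hypothesis gives immediately $W^\dagger\ket{v} = \ket{\hat{\psi}}$. I would then compute the two reflections in the relevant bases. For the outer reflection, $P_0\ket{\hat{\phi}} = \ket{\hat{\phi}}$ (because $\ket{\hat{\phi}} = \ket{\hat{0}}\otimes\ket{\phi}$), while $P_0\ket{\hat{\phi^\perp}} = 0$ by assumption, so $\mathbb{I} - 2P_0$ acts as $\mathrm{diag}(-1, +1)$ on $\{\ket{\hat{\phi}}, \ket{\hat{\phi^\perp}}\}$. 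For the inner reflection I would use $P_1\ket{\hat{\psi}} = \ket{\hat{\psi}}$ (because $\ket{\hat{\psi}} = \ket{\hat{0}}\otimes\ket{\hat{\mu}}\otimes\ket{\psi}$ lies in the range of $P_1$), which yields $(\mathbb{I} - 2P_1)W^\dagger\ket{v} = -\ket{\hat{\psi}}$.

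With these pieces in place I would push a general vector $\sin\gamma\ket{\hat{\phi}} + \cos\gamma\ket{\hat{\phi^\perp}}$ through $G$ step by step: apply $\mathbb{I} - 2P_0$ and re-expand in the $\{\ket{v}, \ket{v^\perp}\}$ basis (this is where the angle-addition identities enter and the combination $\gamma + \theta$ first appears), apply $W^\dagger$, apply $\mathbb{I} - 2P_1$, apply $W$, and finally the overall sign. Collecting terms and using the sine/cosine addition formulas twice should return $\sin(\gamma + 2\theta)\ket{\hat{\phi}} + \cos(\gamma + 2\theta)\ket{\hat{\phi^\perp}}$, which is the claim.

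I expect the one genuinely nontrivial step to be showing that $G$ actually leaves the plane $S$ invariant, i.e. that the inner reflection does not leak $W^\dagger\ket{v^\perp}$ out of the two-dimensional picture. Concretely this amounts to proving $P_1 W^\dagger\ket{v^\perp} = 0$, equivalently that $W^\dagger\ket{v^\perp}$ is orthogonal to the \emph{entire} range of $P_1$ and not merely to $\ket{\hat{\psi}}$. Orthogonality to $\ket{\hat{\psi}}$ alone is automatic from $W^\dagger W = \mathbb{I}$, but orthogonality to all of $\mathrm{range}(P_1) = \ket{\hat{0}}\otimes\ket{\hat{\mu}}\otimes(\cdot)$ is exactly where the ``oblivious'' nature of the hypothesis must be used: one needs that $W$ admits the same decomposition $W(\ket{\hat{0}}\otimes\ket{\hat{\mu}}\otimes\ket{\chi}) = \sin\theta\,\ket{\hat{0}}\otimes V\ket{\chi} + \cos\theta\,\ket{\Xi_\chi}$, with the same angle $\theta$, the same isometry $V$, and $P_0\ket{\Xi_\chi} = 0$, for every input $\ket{\chi}$, not only for $\ket{\chi} = \ket{\psi}$. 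Expanding $\bra{v^\perp}W(\ket{\hat{0}}\otimes\ket{\hat{\mu}}\otimes\ket{\chi})$ under this structure, the two surviving inner products both reduce to $\sin\theta\cos\theta\,\langle\psi|\chi\rangle$ with opposite signs and cancel, giving $P_1 W^\dagger\ket{v^\perp} = 0$. Once this invariance is secured, the remaining computation is the routine two-reflection trigonometry described above.
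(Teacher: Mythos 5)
Your proposal is correct, but note that the paper itself contains no proof of this statement: the lemma is imported verbatim from the cited reference \cite{li2023succinct}, so the only available comparison is with the standard argument in that literature, which is precisely the two-reflection / invariant-two-dimensional-subspace argument you outline. You correctly isolate the one genuinely delicate point: the hypothesis must be read ``obliviously,'' i.e.\ the decomposition $W(\ket{\hat{0}}\otimes\ket{\hat{\mu}}\otimes\ket{\chi}) = \sin\theta\,\ket{\hat{0}}\otimes V\ket{\chi} + \cos\theta\,\ket{\Xi_{\chi}}$ with the \emph{same} $\theta$, the same isometry $V$, and $P_{0}\ket{\Xi_{\chi}}=0$ must hold for every $\ket{\chi}$. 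This is essential: for a single fixed $\ket{\psi}$ the claim is false, since one can unitarily extend $W$ off the span of $\ket{\hat{\psi}}$ so that $W^{\dagger}\ket{v^{\perp}} = \ket{\hat{0}}\otimes\ket{\hat{\mu}}\otimes\ket{\psi^{\perp}}$, which lies in the range of $P_{1}$; the Grover iterate then acts as $\sin\gamma\ket{\hat{\phi}} + \cos\gamma\ket{\hat{\phi^{\perp}}} \mapsto -\sin\gamma\ket{\hat{\phi}} + \cos\gamma\ket{\hat{\phi^{\perp}}}$ rather than as a rotation by $2\theta$. Two steps you use implicitly should be made explicit in a full write-up. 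First, the cancellation in $\bra{v^{\perp}}W(\ket{\hat{0}}\otimes\ket{\hat{\mu}}\otimes\ket{\chi})$ requires $\braket{\Xi_{\psi}|\Xi_{\chi}} = \braket{\psi|\chi}$, which is not part of the hypothesis but follows from it: expanding $\braket{\chi_{1}|\chi_{2}} = \bra{\hat{\chi}_{1}}W^{\dagger}W\ket{\hat{\chi}_{2}} = \sin^{2}\theta\braket{\chi_{1}|\chi_{2}} + \cos^{2}\theta\braket{\Xi_{\chi_{1}}|\Xi_{\chi_{2}}}$ (the cross terms vanish because the two components lie in the range of $P_{0}$ and its orthogonal complement respectively) gives the needed identity whenever $\cos\theta \neq 0$, and when $\cos\theta = 0$ the offending term is absent. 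Second, your final step uses $WW^{\dagger}\ket{v^{\perp}} = \ket{v^{\perp}}$, which requires $W$ to be unitary rather than a proper isometry; this is consistent with the paper's usage, where $W$ is a unitary circuit and ``isometry'' refers to the implemented map $\ket{\psi}\mapsto\ket{\hat{\phi}}$ that enlarges the register by $b$ qubits. With those two points spelled out, your argument is a complete and correct proof.
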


While unitary $W$ implements the state $\ket{\phi}$ only faultily, with the probability of success being $\sin^{2}(\theta)$, using Lemma~\ref{lm:oaa} we can iterate the construction $-W(\mathbb{I} - 2P_{1}) W^{\dagger}(\mathbb{I} - 2P_{0})$ boosting the probability of implementing $\ket{\phi}$.

\subsection{Quantum Jump Unraveling Of The Lindblad Master Equation}

For convenience we re-state the full Lindblad master equation given below,
\begin{equation}
    \frac{d\rho}{dt}  = -i [H, \rho] + \sum_{i=1}^{m}\left(L_{\mu}\rho L_{\mu}^{\dagger} - \frac{1}{2}\{L_{\mu}^{\dagger}L_{\mu}, \rho\}\right) =  \mathcal{L}(\rho)
\end{equation}
where $\mathcal{L}$ is defined as the Lindbladian superoperator for the system.
%where $H$ is a Hamiltonian and $L_{\mu}$ are jump operators of the Lindbladian superoperator $\mathcal{L}$. 
The standard way to interpret the Lindblad master equation is that the Lindbladian superoperator is a generator for a family of completely positive trace preserving maps acting on density matrices parameterized by time. Given that density operators can also be interpreted as a probability distribution over pure states in Hilbert space leads to another interpretation of the dynamics generated by the Lindblad master equation. Alternatively, we can interpret the density operator as some initial probability distribution over Hilbert space evolving under some stochastic process \cite{breuer2007the}. The Lindblad master equation is then ``unraveled" as a stochastic Schr{\"o}dinger equation describing how the random pure state vector evolves under the stochastic process defined by the Lindbladian. Averaging over many solutions of stochastic Schr{\"o}dinger equation, called trajectories, reproduces the dynamics of a density operator evolving under the Lindblad master equation.

For a given Lindbladian $\mathcal{L}$ there exists multiple different unravelings, with the quantum jump unraveling being the one we will base our algorithm off of \cite{breuer2007the}.
In this picture, the dynamics generated by the Lindblad master equation can be interpreted as a piece-wise deterministic stochastic process over Hilbert space and the quantum state in operator form can be reconstructed as the expectation value of the pure state operator $\ket{\psi}\bra{\psi}$ with respect to this distribution at the given time. In this unraveling, a pure state of the ensemble associated with the initial density operator evolves deterministically under a non-Hermitian operator given by $\hat{H} = H - \frac{i}{2}\sum_{\mu = 1}^{m}L_{\mu}^{\dagger}L_{\mu}$ for a random period of time before being interrupted by quantum jumps. The deterministic evolution is given by the map
\begin{equation}
    \label{eq:detevo}
    G_{t}(\ket{\psi}) = \frac{e^{-i\hat{H}t}\ket{\psi}}{\norm{e^{-i\hat{H}t}\ket{\psi}}},
\end{equation}
along with the $\mu$th quantum jump given by
\begin{equation}
    \label{eq:jump}
    J_{\mu}(\ket{\psi}) = \frac{L_{\mu}\ket{\psi}}{\norm{L_{\mu}\ket{\psi}}} \quad \text{ occuring with probability}
    \quad \mathrm{Pr}(\mu | \ket{\psi}) = \frac{\norm{L_{\mu}\ket{\psi}}^{2}}{\sum_{\mu = 1}^{m}\norm{L_{\mu}\ket{\psi}}^{2}}.
\end{equation}

Given an initial state $\ket{\psi(0)}$ evolving under a trajectory interrupted by a total of $N$ quantum jumps, the resulting state at time $T$ can be expressed as:
\begin{equation}
    \label{eq:extraj}
    \ket{\psi(T)} = G_{(T - t^{(N)})} \circ J_{\mu^{(N)}} \circ G_{(t^{(N)} - t^{(N-1)})} \cdots J_{\mu^{(2)}}\circ G_{(t^{(2)} - t^{(1)})}\circ J_{\mu^{(1)}}\circ G_{t^{(1)}}(\ket{\psi(0)})
\end{equation}
where each random $t^{(i)} \in [0, T]$ indicates the time of the $i$-th quantum jump. Defining the $i$-th holding-time between quantum jumps to be $t_{h}^{(i)} = t^{(i)} - t^{(i-1)}$ allows us to rewrite equation~\eqref{eq:extraj} as
\begin{align}
    \label{eq:trajop}
    \ket{\psi(T)} &= G_{(T - \sum_{i=1}^{N}t_{h}^{(i)})} \circ J_{\mu^{(N)}} \circ G_{t_{h}^{(N)}} \cdots J_{\mu^{(2)}}\circ G_{t_{h}^{(2)}}\circ J_{\mu^{(1)}}\circ G_{t_{h}^{(1)}}(\ket{\psi(0)}) \nonumber \\
                  &= P_{t_{h}^{(1)}, \mu^{(1)}, \cdots, t_{h}^{(N)}, \mu^{(N)}}(\ket{\psi(0)})
\end{align}
where we have defined the trajectory operator $P_{t_{h}^{(1)}, \mu^{(1)}, \cdots, t_{h}^{(n)}, \mu^{(n)}}$ to be the map which evolves the input state $\ket{\psi(0)}$ along the sampled trajectory.
Since quantum jumps occur at random intervals of time, the holding times $t_{h}^{(i)}$ are also defined as random variables, with the conditional cumulative distribution function given by
\begin{equation}
    \label{eq:holdingtime}
    \mathrm{Pr}\left(t_{h} \leq \tau | \ket{\psi(t)}\right)= 1 - \norm{e^{-i \tau \tilde{H}}\ket{\psi(t)}}^{2}.
\end{equation}

Evolving an initial state $\ket{\psi(0)}$ under such sampled trajectory operators realizes a solution to the stochastic Schr{\"o}dinger equation associated with the quantum jump unraveling. Thus, averaging over all trajectories associated with the initial state $\ket{\psi(0)}$ reproduces the exact evolution of the initial density operator $\ket{\psi(0)}\bra{\psi(0)}$ under the Lindblad master equation. Defining the superoperator form of the trajectory operator as
\begin{equation}
    \mathcal{P}_{t_{h}^{(1)}, \mu^{(1)}, \cdots, t_{h}^{(N)}, \mu^{(N)}}(\ket{\psi}\bra{\psi}) = P_{t_{h}^{(1)}, \mu^{(1)}, \cdots, t_{h}^{(N)}, \mu^{(N)}}(\ket{\psi})P_{t_{h}^{(1)}, \mu^{(1)}, \cdots, t_{h}^{(N)}, \mu^{(N)}}(\ket{\psi})^{\dagger}
\end{equation}
allows us to write
\begin{align}
    \label{eq:purestateqtraj}
    e^{T\mathcal{L}}\left(\ket{\psi(0)}\bra{\psi(0)}\right) &= \mathbb{E}\left[\ket{\psi(T)}\bra{\psi(T)}\right] \nonumber \\
                                                            &= \sum_{N=1}^{\infty}\int_{0}^{T}dt_{h}^{(1)}\sum_{\mu^{(1)}=1}^{m}\cdots \int_{0}^{T}dt_{h}^{(N)}\sum_{\mu^{(N)}=1}^{m} \nonumber \\ &p(t_{h}^{(1)}, \mu^{(1)},\cdots, t_{h}^{(N)}, \mu^{(N)}; T) \mathcal{P}_{t_{h}^{(1)}, \mu^{(1)},\cdots, t_{h}^{(N)}, \mu^{(N)}}(\ket{\psi(0)}\bra{\psi(0)}).
\end{align}
where $p(t_{h}^{(1)}, \mu^{(1)}, \cdots, t_{h}^{(N)}, \mu^{(N)}; T) = \mathrm{Pr}\left(t_{h}^{(1)}, \mu^{(1)}, \cdots, t_{h}^{(N)}, \mu^{(N)}, \sum_{i=1}^{N}t_{h}^{(i)} \leq T, \sum_{i=1}^{N+1}t_{h}^{(i)} > T\right)$  denotes the probability density for sampling a specific trajectory with final evolution time $T$, holding times $\{t_{h}^{(i)}\}_{i=1}^{N}$ and quantum jumps $\{\mu^{(i)}\}_{i=1}^{N}$. Averaging the initial pure state $\ket{\psi(0)}$ over the distribution associated with the initial density operator $\rho(0)$ gives us
\begin{align}
    \label{eq:qtraj}
    e^{T\mathcal{L}}\left(\rho(0)\right) &= \sum_{N=1}^{\infty}\int_{0}^{T}dt_{h}^{(1)}\sum_{\mu^{(1)}=1}^{m}\cdots \int_{0}^{T}dt_{h}^{(N)}\sum_{\mu^{(N)}=1}^{m} \nonumber \\ &p(t_{h}^{(1)}, \mu^{(1)},\cdots, t_{h}^{(N)}, \mu^{(N)}; T) \mathbb{E}_{\psi \sim \rho(0)}\left[\mathcal{P}_{t_{h}^{(1)}, \mu^{(1)},\cdots, t_{h}^{(N)}, \mu^{(N)}}(\ket{\psi(0)}\bra{\psi(0)})\right], 
\end{align}
which relates the dynamics generated by Lindbladian $\mathcal{L}$ to an average over trajectories on the pure state space.

In the section to follow we introduce a new quantum algorithm inspired by the framework described above restricted to Lindbladians which satisfy the constraint,
\begin{equation}
    \label{eq:algconstraint}
    \sum_{\mu=1}^{m}L_{\mu}^{\dagger}L_{\mu} = \Gamma \mathbb{I},
\end{equation}
for some $\Gamma \in \mathbb{R}_{+}$. Although our algorithm is not fully general, many common Lindbladians satisfy the above constraint such as Pauli channel dissipation, and  Lindbladians used to prepare certain thermal states\cite{chen2023quantum,chen2023efficient}. Moreover Lindbladians used to prepare ground states of frustration free Hamiltonians \cite{verstraete2009quantum}, as well as Lindbladians that perform continuous-time error correction \cite{Oreshkov_2013} also satisfy equation~\eqref{eq:algconstraint}. In-addition, any classical continuous-time Markov chain can be embedded into a Lindbladian which satisfies equation~\eqref{eq:algconstraint}; to do so one embeds the transition matrix associated to the continuous-time Markov chain into its quantum channel equivalent and the rate transitions occur at into $\Gamma$. Lemma~\ref{lm:ntos} of Appendix~\ref{app:A} implies that our Lindbladian which satisfies equation~\eqref{eq:algconstraint} is in-turn fully defined by this quantum channel and rate $\Gamma$. Interestingly certain models of cross-talk in superconducting qubits have been modeled by Lindbladians that satisfy our algorithm's constraint as well \cite{tripathi2022suppression}.

\section{New Algorithm}
\label{sec:newalg}

At a high level our algorithm involves sampling the random times of deterministic evolution $t_{h}^{(i)}$ during compilation, and building a quantum circuit that implements the trajectory operator on the initial state. Equation~\eqref{eq:qtraj} assures the measurement statistics produced at the end of each run of a sampled circuit reproduces the measurement statistics of evolving the initial state under the map $e^{T\mathcal{L}}$. As inputs to our classical compilation procedure we will assume access to the quantum circuit constructions for an $(\alpha, a, 0)$-block-encoding $U(H)$ of the Hamiltonian associated with $\mathcal{L}$, an $(\alpha_{\mu}, a, 0)$-block-encoding $U(L_{\mu})$ for each of the $m$ jump operators $L_{\mu}$ associated with $\mathcal{L}$, as well as their respective inverses. Before discussing the full circuit compilation procedure, we first explore how our algorithm's constraint of equation~\eqref{eq:algconstraint} simplifies the quantum jump unraveling of the Lindblad master equation discussed above.

To see how the jump unraveling simplifies, consider a pure state out of an ensemble which defines the density operator for a state at time $t$. Denote this pure state by $\ket{\psi(t)}$. Equation~\eqref{eq:holdingtime}, which defines the cumulative distribution for the holding time between the current state and next jump $t_{h}$, gives 
\begin{align}
    \label{eq:algcdf}
    \mathrm{Pr}(t_{h} \leq \tau | \ket{\psi(t)}) &= 1 - \norm{e^{-i \tau \tilde{H}}\ket{\psi(t)}}^{2} = 1 - \norm{e^{-i\tau H - \frac{\tau}{2}\sum_{\mu=1}^{m}L_{\mu}^{\dagger}L_{\mu}}\ket{\psi(t)}}^{2} \nonumber \\
                                              &= 1 - \norm{e^{-i \tau H - \frac{\tau \Gamma}{2} \mathbb{I}}\ket{\psi(t)}}^{2}  = 1 - e^{-\tau \Gamma}\norm{e^{-i \tau H} \ket{\psi(t)}}^{2} \nonumber \\
            & = 1 - e^{-\Gamma \tau},
\end{align}
where the second line follows from applying the constraint of equation~\eqref{eq:algconstraint}.  The key point to note about equation~\eqref{eq:algcdf} is that the resulting cumulative distribution function is \emph{independent} of $\ket{\psi(t)}$. This implies we can sample from both $\mathrm{Pr}(t_{h} \leq \tau | \ket{\psi(t)})$ and $\mathrm{Pr}(t_{h} = \tau | \ket{\psi(t)})$ without knowing the state $\rho(t)$ or which pure state in the ensemble we are evolving. This condition is what allows us to sample the random times of deterministic evolution during the compilation of a circuit implementing a trajectory.

Next, we consider the types of trajectory operators associated with  Lindbladians which statisfy equation~\eqref{eq:algconstraint}. Consider the portion of a trajectory characterized by deterministic evolution given by equation~\eqref{eq:detevo}. Applying our constraint of equation~\eqref{eq:algconstraint} allows us to write the map characterizing the deterministic evolution as:
\begin{align}
    \label{eq:constraintdetevo}
    G_{t}(\ket{\psi}) &= \frac{e^{-it\hat{H}}\ket{\psi}}{\norm{e^{-it\hat{H}}\ket{\psi}}} = \frac{e^{-itH - \frac{t}{2}\sum_{\mu=1}^{m}L_{\mu}^{\dagger}L_{\mu}}\ket{\psi}}{\norm{e^{-itH - \frac{t}{2}\sum_{\mu=1}^{m}L_{\mu}^{\dagger}L_{\mu}}\ket{\psi}}} = \frac{e^{-t\frac{\Gamma}{2}} e^{-itH} \ket{\psi}}{\norm{e^{-t\frac{\Gamma}{2}} e^{-itH} \ket{\psi}}} = e^{-itH}\ket{\psi}.
\end{align}
Thus the deterministic evolution of equation~\eqref{eq:detevo} reduces to generic unitary evolution under the Hamiltonian $H$ of the Lindbladian. Considering the jump dynamics next, and applying our constraint of equation~\eqref{eq:algconstraint}, allows us to write down the jump dynamics for Lindbladians that satisfy equation~\eqref{eq:algconstraint} as:
\begin{equation}
    \label{eq:constraintjumpevo}
    J_{\mu}(\ket{\psi}) = \frac{L_{\mu}\ket{\psi}}{\norm{L_{\mu}\ket{\psi}}} \quad \text{ occuring with probability}
    \quad \mathrm{Pr}(\mu | \ket{\psi}) = \frac{\norm{L_{\mu}\ket{\psi}}^{2}}{\Gamma}.
\end{equation}
Defining $\mathcal{J}_{\mu}$ to be the superoperator form of the jump maps defined above allows us to average over all possible jump maps, finding that that the average map
\begin{align}
    \label{eq:jumpevo}
    \mathcal{J}(\ket{\psi}\bra{\psi}) = \sum_{\mu=1}^{m}\mathrm{Pr}(\mu|\ket{\psi})\mathcal{J}_{\mu}(\ket{\psi}\bra{\psi}) = \sum_{\mu=1}^{m}\mathrm{Pr}(\mu|\ket{\psi})J_{\mu}(\ket{\psi})J_{\mu}(\ket{\psi})^{\dagger} = \sum_{\mu=1}^{m}\frac{L_{\mu}\ket{\psi}\bra{\psi}L_{\mu}^{\dagger}}{\Gamma},
\end{align}
which is  a valid quantum channel with Kraus operators $L_{\mu} / \sqrt{\Gamma}$. If we define the map $\mathcal{U}_{t}(\rho) = e^{-itH}\rho e^{itH}$, then we can express an arbitrary trajectory operator for a Lindbladian that satisfies equation~\eqref{eq:algconstraint} as
\begin{equation}
    \label{eq:constrainttraj}
    \mathcal{P}_{t_{h}^{(1)}, \mu^{(1)}, \cdots, t_{h}^{(N)}, \mu^{(N)}}(\ket{\psi}\bra{\psi}) = \mathcal{U}_{(T - \sum_{i=1}^{N}t_{h}^{(i)})} \circ \mathcal{J}_{\mu^{(N)}} \circ \mathcal{U}_{t_{h}^{(N)}} \cdots \mathcal{J}_{\mu^{(1)}} \circ \mathcal{U}_{t_{h}^{(1)}}(\ket{\psi}\bra{\psi}).
\end{equation}
Thus, sampled trajectories for Lindbladians which satisfy equation~\eqref{eq:algconstraint} are built from generic unitary evolution interrupted by quantum jumps. Pushing this analysis further, we can substitute equation~\eqref{eq:constrainttraj} into equation~\eqref{eq:qtraj} and perform the sums over $\{\mu^{(i)}\}_{i=1}^{N}$. Applying equation~\eqref{eq:jumpevo} gives us the final form for the description of the evolution under $\mathcal{L}$ satisfying equation~\eqref{eq:algconstraint} as
\begin{align}
    \label{eq:constraintqtraj}
    e^{T\mathcal{L}}\left(\rho(0)\right) &= \sum_{N=1}^{\infty}\int_{0}^{T}dt_{h}^{(1)}\cdots \int_{0}^{T}dt_{h}^{(N)} p(t_{h}^{(1)}, \cdots, t_{h}^{(N)}; T) \mathcal{U}_{(T-\sum_{i=1}^{N}t_{h}^{(i)})}\circ \mathcal{J} \circ \mathcal{U}_{t_{h}^{(N)}} \cdots \mathcal{J} \circ \mathcal{U}_{t_{h}^{(1)}}(\rho(0))\nonumber \\
                                         &= \sum_{N=1}^{\infty}\int_{0}^{T}dt_{h}^{(1)}\cdots \int_{0}^{T}dt_{h}^{(N)} p(t_{h}^{(1)}, \cdots, t_{h}^{(N)}; T) \mathcal{T}_{t_{h}^{1}, \cdots, t_{h}^{(N)}}(\rho(0)),
\end{align}
where $p(t_{h}^{(1)}, \cdots, t_{h}^{(N)}; T)$ denotes the probability of sampling a trajectory that occurs over time $T$ with holding times $\{t_{h}^{(i)}\}_{i=1}^{N}$  and $\mathcal{T}_{t_{h}^{(1)}, \cdots, t_{h}^{(N)}} = \mathcal{U}_{(T-\sum_{i=1}^{N}t_{h}^{(i)})}\circ \mathcal{J} \circ \mathcal{U}_{t_{h}^{(N)}} \cdots \mathcal{J} \circ \mathcal{U}_{t_{h}^{(1)}}$. A trajectory of $N$ jumps occurring over time $T$ implies the sum over the $N$ holding times $t_{h}^{(i)}$ is less than or equal to the final time of the trajectory $T$ and adding an additional $N+1$ holding time makes this sum greater than the final time of the trajectory. Defining $S^{(N)} = \sum_{i=1}^{N}t_{h}^{(i)}$ we can write 
\begin{align}
    \label{eq:pdensity}
    p(t_{h}^{(1)}, \cdots, t_{h}^{(N)}; T) &= \mathrm{Pr}\left(t_{h}^{(1)}, \cdots, t_{h}^{(N)}, S^{(N)} \leq T, S^{(N+1)} > T\right) \nonumber \\
                                           &= \mathrm{Pr}\left(t_{h}^{(1)}, \cdots, t_{h}^{(N)} \big| S^{(N)} \leq T, S^{(N+1)} > T\right)\mathrm{Pr}\left(S^{(N)} \leq T, S^{(N+1)} > T\right) \nonumber \\
                                           &= \mathrm{Pr}\left(t_{h}^{(1)}, \cdots, t_{h}^{(N)} \big| S^{(N)} \leq T, S^{(N+1)} > T\right)\left[\mathrm{Pr}\left(S^{(N)} \leq T \right) + \mathrm{Pr}\left(S^{(N+1)} > T\right) - 1 \right]\nonumber \\
                                           &= \mathrm{Pr}\left(t_{h}^{(1)}, \cdots, t_{h}^{(N)} \big| S^{(N)} \leq T, S^{(N+1)} > T\right)\left[\mathrm{Pr}\left(S^{(N)} \leq T \right) - \mathrm{Pr}\left(S^{(N+1)} \leq T\right)\right]\nonumber \\
                                           &= \mathrm{Pr}\left(t_{h}^{(1)}, \cdots, t_{h}^{(N)} \big| S^{(N)} \leq T, S^{(N+1)} > T\right)\mathrm{Pr}(N(T) = N)
\end{align}
where the last equality follows from Lemma~\ref{lm:ntos} of Appendix~\ref{app:B} with $N(T) = \sup\{n : S^{(n)} \leq T\}$. The random variable $N(T)$ counts the total number of quantum jumps occurring with-in an interval of time $T$. Since the holding times between jump events $t_{h}^{(i)}$ have a cumulative distribution of equation~\eqref{eq:algcdf}, Theorem~\ref{thm:pois} of Appendix~\ref{app:B} implies $N(T) \sim \mathrm{Pois}(\Gamma T)$. Thus equation~\eqref{eq:pdensity} gives us a partial characterization of the probability density $p(T)$ along with some global structure regarding how trajectories evolving up to time $T$ behave.

Equations~\eqref{eq:constraintqtraj} and~\eqref{eq:pdensity} suggest we can simulate the evolution of a system under Lindbladian $\mathcal{L}$ which satisfies equation~\eqref{eq:algconstraint} by sampling a set of holding times $\mathcal{S} = \{t_{h}^{(i)}\}_{i=1}^{N}$, checking if our sampled trajectory reaches the requested simulation time $T$, and then compiling a circuit composed of unitary evolution under $H$ interspersed by quantum channels $\mathcal{J}$. 

\begin{algorithm}[hbt!]
    \caption{Trajectory Compilation Protocol}\label{alg:one}
    \begin{algorithmic}
        \Require time $T$, precision $\epsilon$, query access to block-encodings $U(H)$ and $\{U(L_{\mu})\}_{\mu=1}^{m}$ and inverses, rate $\Gamma$
        \State $r \gets r(T, \epsilon, \Gamma)$
        \State $S \gets \{\} $
        \State $t \gets 0$
        \State $i \gets 1$
        \While{$|S| \leq r$ and $t \leq T$} 
        \State $t_{h}^{(i)} \sim \mathrm{Pr}(t_{h})$
        \State Append $t_{h}^{(i)}$ to $S$
        \State $t \gets t + t_{h}^{(i)}$
        \State $i \gets i + 1$
        \EndWhile
        \If{$|S| > r$ and $t < T$}
        \State Repeat \textbf{Algorithm 1}
        \Else
        \State Compile the quantum circuit in FIG~\ref{fig:circuit} associated with jump times $S \setminus \{t_{h}^{(i)}\}$ and precision $\epsilon$.
        \State
        \EndIf
    \end{algorithmic}
\end{algorithm}

The full trajectory compilation protocol described in Algorithm~\ref{alg:one} ultimately amounts to sampling deterministic evolution times $t_{h}^{(i)}$ from probability density $\mathrm{Pr}(t_{h})$ until either the sum over the samples is greater than the total simulation time $T$ or the maximum number of samples $r$ has been reached. The quantum circuit associated with the sampled trajectory is then compiled or the procedure will have to be repeated. Note, if the sum of the sampled deterministic evolution times $t_{h}^{(i)}$ is strictly greater than the total simulation time $T$, we discard the last sample to account for the fact that the trajectory has scheduled the last jump to occur after the total simulation time has elapsed.

The parameter $r$ in our algorithm denotes the maximum number of samples able to be drawn and is directly related to the maximum number of quantum jumps allowed in a compiled trajectory. This parameter is ultimately set by the total simulation time $T$ and precision of the simulation $\epsilon$.

\begin{comment}
The full randomized trajectory compilation procedure described in Algorithm~\ref{alg:one} ultimately amounts to independently sampling deterministic evolution times $t_{h}^{(i)}$ from probability density $\mathrm{Pr}(t_{h})$ until either the sum over the samples $t_{h}^{(i)}$ is greater than the total simulation time $T$ or a total number of $r$ samples have been drawn. If the sum is strictly greater than the total simulation time $T$, we discard the last sample to account for the fact that the trajectory has scheduled the last jump to occur after the total simulation time has elapsed. If we happen to generate the maximum of $r$ samples and the sum over the samples $t_{h}^{(i)}$ is still strictly less than the total simulation time $T$, then we discard all current samples and begin the procedure again. % How many times ?
The parameter $r$ which denotes the maximum number of samples able to be drawn is set by the total simulation time $T$ and precision of the simulation $\epsilon$.
\end{comment}

To generate a sample $t_{h}^{(i)}$ we employ the standard protocol of inversion sampling. To inversion sample from an arbitrary probability density $p(x)$, one first samples $\eta$ from $\mathrm{uniform}(0,1)$, then calculates $x = f^{-1}(\eta)$, where $f(x)$ is the cumulative distribution function associated with $p(x)$. The resulting $x$ should be distributed according to $p(x)$. Thus to produce a sample $t_{h}^{(i)}$ we classically compute
\begin{equation}
    t_{h}^{(i)} = \frac{1}{\Gamma}\ln\left(\frac{1}{1 - \eta^{(i)}}\right) \text{ where } \eta^{(i)} \sim \mathrm{uniform}(0, 1)
\end{equation}

The quantum circuits associated with a sampled trajectory are depicted below in FIG~\ref{fig:circuit}. Blocks $K(t_{h}^{(i)})$ denote a circuit approximating the unitary evolution generated by $H$ over time $t_{h}^{(i)}$ and blocks $L$ denote a circuit approximating the jump evolution modeled by the quantum channel $\mathcal{J}$. The quantum channels that $K(t_{h}^{(i)})$ and $L$ implement are given by $\mathcal{K}_{t_{h}^{(i)}}$ and $\mathcal{J}'$ respectively.
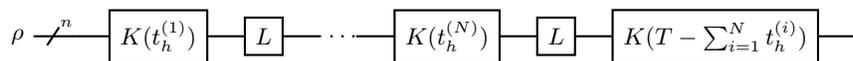
\begin{figure}[h]
    \centering
    \begin{quantikz}
        \lstick{$\rho$} & \qwbundle{n} & \gate{K(t_{h}^{(1)})} & 
        \gate{L} &  \ \ldots & \gate{K(t_{h}^{(N)})} & \gate{L} & \gate{K(T - \sum_{i=1}^{N}t_{h}^{(i)} )} &
    \end{quantikz}
    \caption{Trajectory circuit associated with $\mathcal{L}$}
    %\vspace*{-5pt} 
    \label{fig:circuit}
\end{figure}

To implement each $K(t_{h}^{(i)})$ we can simply choose to use any of the Hamiltonian simulation algorithms available such as \cite{berry2015simulating, berry2015simulating, berry2014exponential, campbell2019random, childs2019fasterquantum, low2019hamiltonian, nakaji2024high}. Since we want to optimize the number of queries to our input $U(H)$, we choose to implement the Hamiltonian simulation components $K(t_{h}^{(i)})$ using the algorithm, first introduced by Low et al. in \cite{low2019hamiltonian} and further analyzed by Gily\'en et al. in \cite{gilyen2019quantum}.

For the circuit $L$, we begin by recalling our average jump $\mathcal{J}$ found in equation~\eqref{eq:jumpevo} and re-stated below,
\begin{align}
    %\label{eq:jumpevo}
    \mathcal{J}(\ket{\psi}\bra{\psi}) = \sum_{\mu=1}^{m}\mathrm{Pr}(\mu|\ket{\psi})\mathcal{J}_{\mu}(\ket{\psi}\bra{\psi}) = \sum_{\mu=1}^{m}\mathrm{Pr}(\mu|\ket{\psi})J_{\mu}(\ket{\psi})J_{\mu}(\ket{\psi})^{\dagger} = \sum_{\mu=1}^{m}\frac{L_{\mu}\ket{\psi}\bra{\psi}L_{\mu}^{\dagger}}{\Gamma}.
\end{align}
Defining Kraus operators $K_{\mu} = L_{\mu} / \sqrt{\Gamma}$, one can readily confirm from applying equation~\eqref{eq:algconstraint} that $K_{\mu}$ satisfy $\sum_{\mu=1}^{m}K_{\mu}^{\dagger} K_{\mu} = \mathbb{I}$, thus $\mathcal{J}$ is a valid quantum channel. This implies that the jump map $\mathcal{J}$ can be approximated faultily through the use of the circuit depicted in FIG~\ref{fig:cpbc} assuming access to $(s_{\mu}, n', \epsilon)$-block encodings for each of the Kraus operators $K_{\mu}$ and oracle $B$ as defined in Lemma~\ref{lm:cpbc}. Since $U(L_{\mu})$ is an $(\alpha_{\mu}, a, 0)$-block encoding of $L_{\mu}$, it is also an $(\alpha_{\mu}/\sqrt{\Gamma}, a, 0)$-block encoding of the Kraus operator $K_{\mu}$ for the quantum channel $\mathcal{J}$. Thus, Lemma~\ref{lm:cpbc} implies the circuit construction of FIG~\ref{fig:cpbc} implements the map $\mathcal{J}$ exactly, with some probability of success $p_{0}$ conditioned on the measurement of the block encoding register in state $\ket{0}$. 

To fully realize the circuit $L$ we need a non-faulty implementation of the map $\mathcal{J}$. Given the faulty implementation of the map $\mathcal{J}$ using the circuit construction described above, we can construct a non-faulty implementation of $\mathcal{J}$ using the technique of oblivious amplification of isometries described in Lemma~\ref{lm:oaa}. Thus we can set
\begin{equation}
    \label{eq:jumpgadget}
    L = [-W(\mathbb{I} - 2 P_{1})W^{\dagger}(\mathbb{I} - 2P_{0})]^{t}W
\end{equation}
where $W$ is the circuit construction described above and depicted in FIG~\ref{fig:cpbc}, with Lemma~\ref{lm:oaa} implying we set $P_{1} = B\ket{0}\bra{0}B^{\dagger} \otimes \mathbb{I} \otimes \mathbb{I}$, and $P_{0} = \mathbb{I}\otimes \ket{0}\bra{0}^{\otimes a} \otimes \mathbb{I}$. The number of times we apply the iterate $- W (\mathbb{I} - 2 P_{1})W^{\dagger}(\mathbb{I} -2 P_{0})$ is the number of times needed to boost the success probability of applying the map $\mathcal{J}$ to one. Lemma~\ref{lm:oaa} implies for an initial success probability of $p_{0}$, we must set $t = O(1 /\sqrt{p_{0}})$ to boost the success probability to one. We can quickly calculate the cost for implementing the circuit $L$ summarized in the Lemma below.
\begin{lemma}
    \label{lm:jumpgadget}
    The quantum circuit described by equation~\eqref{eq:jumpgadget} requires:
    \begin{equation}
        \label{eq:jumpQ}
        O\left(\sqrt{\frac{\sum_{\mu=1}^{m}\alpha_{\mu}^{2}}{\Gamma}}\right)
    \end{equation}
    queries to each $U(L_{\mu})$ and $U(L_{\mu})^{\dagger}$ along with
    \begin{equation}
        O\left((\log(m) + a)\sqrt{\frac{\sum_{\mu=1}^{m}\alpha_{\mu}^{2}}{\Gamma}}\right)
    \end{equation}
    additional 1- and 2-qubit gates, and 
    \begin{equation}
       O\left(\log(m) + a\right)
    \end{equation}
    ancillary qubits.
\end{lemma}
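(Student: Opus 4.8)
The plan is to track the single controlling parameter of the circuit in equation~\eqref{eq:jumpgadget}, namely the number $t$ of oblivious-amplitude-amplification iterates, and to express every resource count in terms of it. Since each iterate $-W(\mathbb{I}-2P_1)W^{\dagger}(\mathbb{I}-2P_0)$ invokes $W$ and $W^{\dagger}$ once and the trailing factor $W$ adds one more invocation, the whole circuit $L$ uses $O(t)$ copies of $W$ and $W^{\dagger}$; the query, gate, and ancilla counts all follow once $t$ is pinned down and the per-$W$ cost is accounted for.

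First I would compute the initial success probability $p_{0}=\sin^{2}\theta$ of the faulty implementation $W$ from FIG~\ref{fig:cpbc}. Applying Lemma~\ref{lm:cpbc} with Kraus operators $K_{\mu}=L_{\mu}/\sqrt{\Gamma}$ and block-encoding constants $s_{\mu}=\alpha_{\mu}/\sqrt{\Gamma}$, the branch of $W\ket{0}\ket{0}\ket{\psi}$ on which the block-encoding register is found in $\ket{0}^{\otimes a}$ is the unnormalized state $\frac{1}{\sqrt{\sum_{\mu}s_{\mu}^{2}}}\sum_{\mu}\ket{\mu}\otimes K_{\mu}\ket{\psi}$. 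Its squared norm is the success probability, and using the completeness relation $\sum_{\mu}K_{\mu}^{\dagger}K_{\mu}=\mathbb{I}$ that follows directly from equation~\eqref{eq:algconstraint}, I obtain $p_{0}=(\sum_{\mu}s_{\mu}^{2})^{-1}\bra{\psi}\sum_{\mu}K_{\mu}^{\dagger}K_{\mu}\ket{\psi}=\Gamma/\sum_{\mu}\alpha_{\mu}^{2}$. The decisive feature is that $p_{0}$ is independent of $\ket{\psi}$ — equivalently, the target map $\ket{\psi}\mapsto\sum_{\mu}\ket{\mu}\otimes K_{\mu}\ket{\psi}$ is an isometry — which is precisely the hypothesis needed to invoke Lemma~\ref{lm:oaa} with a fixed rotation angle $\theta$ satisfying $\sin\theta=\sqrt{p_{0}}$.

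Given $\theta$, Lemma~\ref{lm:oaa} shows each iterate advances the angle by $2\theta$, so driving the amplitude to $1$ requires $t=O(1/\theta)=O(1/\sqrt{p_{0}})=O\!\left(\sqrt{\sum_{\mu}\alpha_{\mu}^{2}/\Gamma}\right)$ iterates. The query count then follows immediately: each $W$ realizes the select operation $\sum_{\mu}\ket{\mu}\bra{\mu}\otimes U(K_{\mu})$, which calls each $U(L_{\mu})$ once, and each $W^{\dagger}$ calls each $U(L_{\mu})^{\dagger}$ once, so $L$ makes $O(t)$ queries to every $U(L_{\mu})$ and $U(L_{\mu})^{\dagger}$, giving equation~\eqref{eq:jumpQ}. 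For the gate count I would decompose the two reflections per iterate: $\mathbb{I}-2P_{0}$ is a reflection about $\ket{0}^{\otimes a}$ on the block-encoding register, a multicontrolled phase costing $O(a)$ two-qubit gates, while $\mathbb{I}-2P_{1}=B(\mathbb{I}-2\ket{0}\bra{0})B^{\dagger}$ contributes a multicontrolled phase on the $\log(m)$-qubit selector register costing $O(\log m)$ gates, together with two calls to the state-preparation subroutine $B$. Summing $O(\log m + a)$ gates over the $t$ iterates yields the stated gate bound. Finally, the only registers introduced beyond the system are the $\log(m)$-qubit selector and the $a$-qubit block-encoding ancilla, and amplitude amplification reuses them across all iterates, so the ancilla count is $O(\log m + a)$.

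I expect the computation of $p_{0}$ and the verification that it is state-independent to be the crux: it is what makes the amplification genuinely oblivious and hence what produces a clean, $\ket{\psi}$-independent iterate count $t$, and it is exactly where the structural constraint $\sum_{\mu}L_{\mu}^{\dagger}L_{\mu}=\Gamma\mathbb{I}$ does the real work. The remaining steps are bookkeeping, the only mild care being to confirm that the branch structure of $W$ matches the projector conventions $P_{0}$ and $P_{1}$ required by Lemma~\ref{lm:oaa} so that the lemma applies verbatim, and to fix how the cost of $B$ and $B^{\dagger}$ is attributed (treating them as provided subroutines so their internal cost does not inflate the ``additional'' gate tally).
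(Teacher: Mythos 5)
Your proposal is correct and follows essentially the same route as the paper's proof: both compute the state-independent success probability $p_{0} = \Gamma/\sum_{\mu=1}^{m}\alpha_{\mu}^{2}$ (you via the norm of the post-selected branch, the paper via an equivalent trace calculation), conclude $t = O(1/\sqrt{p_{0}})$ iterates from Lemma~\ref{lm:oaa}, and then tally $O(1)$ queries to each $U(L_{\mu})$, $U(L_{\mu})^{\dagger}$ and $O(\log(m)+a)$ gates per iterate with reusable ancillas. The only cosmetic difference is that the paper counts the $O(\log m)$ gate cost of implementing $B$ explicitly rather than treating it as an external subroutine, which does not change the stated bounds.
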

\begin{proof}
    Consider the quantum circuit described by equation~\eqref{eq:jumpgadget}. The ancillary qubit cost needed to implement the full circuit is equal to the number of ancillary qubits needed to implement one application of $W$ since we are allowed to reset the ancillary qubit register in-between applications of $W$. Each application of $W$ requires $O(a)$ ancillary qubits for the block-encoding register along with an additional $O(\log(m))$ ancillary qubits needed to implement the isometry encoding our completely positive map $\mathcal{J}$. Thus the total number of ancillary qubits needed to implement the circuit described by equation~\eqref{eq:jumpgadget} is given by 
    \begin{equation}
        O(\log(m) + a).
    \end{equation}

    As described previously, Lemma~\ref{lm:oaa} implies we must set $t = O(1 /\sqrt{p_{0}})$ in equation~\eqref{eq:jumpgadget} in-order to boost the success probability of implementing map $\mathcal{J}$ to one, where $p_{0}$ denotes the probability of successfully implementing the map $\mathcal{J}$ using only one application of the circuit $W$. Thus $p_{0}$ is the probability of measuring the block-encoding register in state $\ket{0}$ after applying $W$. Calculating $p_{0}$ we find
    \begin{align}
        p_{0} &= \mathrm{Tr}\left([\mathbb{I}_{M}\otimes\ket{0}\bra{0}_{A}\otimes\mathbb{I}] W(\ket{0}\bra{0}_{M}\otimes\ket{0}\bra{0}_{A}\otimes \rho)W^{\dagger}\right) \nonumber \\ 
              &=  \frac{1}{\sum_{\mu=1}^{m}s_{\mu}^{2}}\sum_{\mu,\nu = 1}^{m}s_{\mu} s_{\nu} \mathrm{Tr}\left(\left[\mathbb{I}_{M}\otimes\ket{0}\bra{0}_{A}\otimes\mathbb{I}\right]\ket{\mu}\bra{\nu} \otimes U(L_{\mu})(\ket{0}\bra{0} \otimes \rho)U(L_{\nu})^{\dagger}\right) \nonumber  \\
              &=  \frac{\Gamma}{\sum_{\mu=1}^{m}\alpha_{\mu}^{2}}\sum_{\mu,\nu = 1}^{m}\frac{\alpha_{\mu}\alpha_{\nu}}{\Gamma}\mathrm{Tr}\left(\left[\mathbb{I}_{M}\otimes\ket{0}\bra{0}_{A}\otimes\mathbb{I}\right]\ket{\mu}\bra{\nu} \otimes U(L_{\mu})(\ket{0}\bra{0} \otimes \rho)U(L_{\nu})^{\dagger}\right) \nonumber  \\
              &=  \frac{\Gamma}{\sum_{\mu=1}^{m}\alpha_{\mu}^{2}}\sum_{\mu,\nu = 1}^{m}\frac{1}{\Gamma}\mathrm{Tr}\left(\ket{\mu}\bra{\nu}_{M} \otimes \ket{0}\bra{0}_{A} \otimes L_{\mu}\rho L_{\nu}^{\dagger}\right) =  \frac{\Gamma}{\sum_{\mu=1}^{m}\alpha_{\mu}^{2}}\sum_{\mu=1}^{m}\mathrm{Tr}\left(K_{\mu}\rho K_{\mu}^{\dagger}\right) \nonumber  \\
              &= \frac{\Gamma}{\sum_{\mu=1}^{m}\alpha_{\mu}^{2}}.
    \end{align}
    Thus the circuit described in equation~\eqref{eq:jumpgadget} requires $O\left(\sqrt{(\sum_{\mu=1}^{m}\alpha_{\mu}^{2})/\Gamma}\right)$ queries to the unitaries $W$, $W^{\dagger}$, $\mathbb{I} - 2P_{0}$, and $\mathbb{I} -2P_{1}$.  

    The circuits $W$ and $W^{\dagger}$ require a single query to the unitary $\sum_{\mu=1}^{m}\ket{\mu}\bra{\mu}\otimes U(L_{\mu})$ or $\sum_{\mu=1}^{m}\ket{\mu}\bra{\mu}\otimes U(L_{\mu})^{\dagger}$ respectively. Each $\sum_{\mu=1}^{m}\ket{\mu}\bra{\mu}\otimes U(L_{\mu})$ requires $O(1)$ queries to each $U(L_{\mu})$. Similarly each $\sum_{\mu=1}^{m}\ket{\mu}\bra{\mu}\otimes U(L_{\mu})^{\dagger}$ requires $O(1)$ queries to each $U(L_{\mu})^{\dagger}$. Thus we find the total number of queries to each $U(L_{\mu})$ being
\begin{equation}
    O\left(\sqrt{\frac{\sum_{\mu=1}^{m}\alpha_{\mu}^{2}}{\Gamma}}\right),
\end{equation}
and similar for each $U(L_{\mu})^{\dagger}$.

The circuits $W$, $W^{\dagger}$, and $\mathbb{I} - 2P_{1}$ all require $O(1)$ queries to the oracles $B$ and $B^{\dagger}$. Since the only constraint on $B$ is that it maps $\ket{0}$ to $\ket{\omega} = \sqrt{\Gamma / (\sum_{\mu=1}^{m}\alpha_{\mu}^{2})}\sum_{\mu=1}^{m}(\alpha_{\mu} / \sqrt{\Gamma})\ket{\mu}$, each $B$ and $B^{\dagger}$ can be implemented in $O(\log(m))$ 1- and 2-qubit gates. Thus the contribution to the 1- and 2-qubit gate count arising from queries to $B$ is
\begin{equation}
    \label{eq:b}
    O\left(\log(m)\sqrt{\frac{\sum_{\mu=1}^{m}\alpha_{\mu}^{2}}{\Gamma}}\right).
\end{equation}

Considering the reflections $\mathbb{I} - 2P_{0}$ and $\mathbb{I} - 2\ket{0}\bra{0} \otimes \mathbb{I} \otimes \mathbb{I}$, each can be implemented with $O(a)$ and  $O(\log(m))$ 1- and 2-qubit gates respectively. Thus the contribution to the 1- and 2-qubit gate count arising from queries to the reflections $\mathbb{I} - 2P_{0}$ and $\mathbb{I} - 2\ket{0}\bra{0} \otimes \mathbb{I} \otimes \mathbb{I}$ is 
\begin{equation}
    \label{eq:refl}
    O\left((\log(m) + a)\sqrt{\frac{\sum_{\mu=1}^{m}\alpha_{\mu}^{2}}{\Gamma}}\right).
\end{equation}

Equations~\eqref{eq:b} and~\eqref{eq:refl} imply the total additional 1- and 2-qubit gate count to implement the circuit described in equation~\eqref{eq:jumpgadget} is
\begin{equation}
    O\left((\log(m) + a)\sqrt{\frac{\sum_{\mu=1}^{m}\alpha_{\mu}^{2}}{\Gamma}}\right)
\end{equation}
\end{proof}

With the randomized compilation procedure described above, along with constructions for components $K(t_{h}^{(i)})$ and $L$, we proceed to analyze the query and 1- and 2-qubit gate costs our quantum algorithm requires in the worst-case.
\subsection{Analysis}
To analyze the costs of the algorithm described above we must first bound the error between the quantum channel generated by the Lindbladian dynamics and the quantum channel our algorithm implements, denoted by $\mathcal{E}_{T,r}$. This bound on our simulation error, given in Lemma~\ref{lm:algerror}, allows us to fix a value for $r$, the maximum number of jumps allowed to be compiled, which assures the total simulation error is of $O(\epsilon)$. To compute the final resource estimates for implementing the algorithm, we analyze the worst-case circuit our randomized compilation procedure produces for the appropriate value of $r$ and $T$.

Before proving Lemma~\ref{lm:algerror} we must first characterize the map $\mathcal{E}_{T,r}$ our algorithm implements. For a fixed simulation time $T$ and maximum number of compilable jumps $r$, denote the probability density of compiling a circuit implementing the map $\mathcal{C}_{t_{h}^{(1)}, \cdots, t_{h}^{(N)}}$ with holding times $\{t_{h}^{(i)}\}_{i=1}^{N}$ as $f(t_{h}^{(1)},\cdots, t_{h}^{(N)}; T, r)$. Thus the map our algorithm implements is defined as
\begin{equation}
    \label{eq:algchannel}
    \mathcal{E}_{T,r} = \mathbb{E}_{f(T,r)}\left[\mathcal{C}_{t_{h}^{(1)}, \cdots, t_{h}^{(N)}}\right] =  \sum_{N = 1}^{\infty}\int_{0}^{T}dt_{h}^{(1)} \cdots \int_{0}^{T} dt_{h}^{(N)}f(t_{h}^{(1)},\cdots, t_{h}^{(N)}; T, r)\mathcal{C}_{t_{h}^{(1)}, \cdots, t_{h}^{(N)}}.
\end{equation}
All that is left is to characterize the probability density $f(T, r)$ which our randomized compilation procedure implements.

Consider the randomized compilation procedure used in our algorithm. The protocol amounts to sampling a set of holding times $\{t_{h}^{(i)}\}_{i = 1}^{N} \sim p(T)$ and rejecting the sample if $N > r$. This procedure is repeated until a sample is accepted and the circuit implementing the map $\mathcal{C}_{t_{h}^{(1)}, \cdots, t_{h}^{(N)}}$ is compiled. Given that we reject samples with $N > r$ we can write down the probability density for $f(T, r)$ as
\begin{equation}
    \label{eq:algpdensity}
    f(t_{h}^{(1)}, \cdots t_{h}^{(N)}; T, r) = \begin{cases} p(t_{h}^{(1)}, \cdots, t_{h}^{(N)}; T) \mathrm{Pr}(N(T) \leq  r) \text{ for } N \leq r \\ 0 \text{ otherwise } \end{cases},
\end{equation}
where the true probability density for sampling a trajectory $p(t_{h}^{(1)}, \cdots, t_{h}^{(N)}; T)$ is defined in equation~\eqref{eq:pdensity}. Thus with $f(T, r)$ fully described we can proceed to proving Lemma~\ref{lm:algerror} which bounds the error between the simulation and true dynamics in diamond distance.
\begin{lemma}
    \label{lm:algerror}
    Let $\mathcal{E}_{T,r}$ denote the map our algorithm implements defined in equation~\eqref{eq:algchannel}. Then
    \begin{equation}
        d_{\diamond}(e^{T\mathcal{L}}, \mathcal{E}_{T,r}) \leq \left(\frac{e \Gamma T}{r}\right)^{r}e^{-\Gamma T} + (r + 1) \epsilon_{H}
    \end{equation}
    where $\epsilon_{H} = \sup\left(\left\{d_{\diamond}\left(\mathcal{U}_{t_{h}^{(i)}}, \mathcal{K}_{t_{h}^{(i)}}\right)\right\}_{i = 1}^{N} \cup \left\{d_{\diamond}\left(\mathcal{U}_{T - \sum_{i=1}^{N}t_{h}^{(i)}}, \mathcal{K}_{T - \sum_{i=1}^{N}t_{h}^{(i)}}\right)\right\}\right)$.
\end{lemma}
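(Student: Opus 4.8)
The plan is to exploit the representations of both channels as averages over trajectory superoperators: $e^{T\mathcal{L}} = \mathbb{E}_{p}[\mathcal{T}_{t_{h}^{(1)},\cdots,t_{h}^{(N)}}]$ from equation~\eqref{eq:constraintqtraj}, and $\mathcal{E}_{T,r} = \mathbb{E}_{f}[\mathcal{C}_{t_{h}^{(1)},\cdots,t_{h}^{(N)}}]$ from equation~\eqref{eq:algchannel}. Subtracting these, adding and subtracting $f(T,r)\,\mathcal{T}$ inside the norm, and invoking sub-additivity together with sub-multiplicativity of the diamond norm, the error splits into a term measuring the mismatch between the ideal trajectory density $p$ of equation~\eqref{eq:pdensity} and the compiled density $f$ of equation~\eqref{eq:algpdensity}, and a term measuring the per-trajectory implementation error. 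Using that each $\mathcal{T}_{t_{h}^{(1)},\cdots,t_{h}^{(N)}}$ is CPTP, so $\dnorm{\mathcal{T}_{t_{h}^{(1)},\cdots,t_{h}^{(N)}}} = 1$, I would arrive at
\begin{equation}
    d_{\diamond}(e^{T\mathcal{L}}, \mathcal{E}_{T,r}) \leq \delta(p, f) + \tfrac{1}{2}\mathbb{E}_{f}\left[\dnorm{\mathcal{T}_{t_{h}^{(1)},\cdots,t_{h}^{(N)}} - \mathcal{C}_{t_{h}^{(1)},\cdots,t_{h}^{(N)}}}\right].
\end{equation}

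For the second term I would note that along a fixed trajectory both $\mathcal{T}$ and $\mathcal{C}$ are alternating compositions of Hamiltonian-evolution segments and the \emph{same} jump channel $\mathcal{J}$, which is implemented exactly by the oblivious-amplification gadget of Lemma~\ref{lm:jumpgadget} and therefore contributes no error. The only discrepancy is between the ideal unitary channels $\mathcal{U}_{t_{h}^{(i)}}$ and their Hamiltonian-simulation approximations $\mathcal{K}_{t_{h}^{(i)}}$. A telescoping argument using unitary invariance and sub-multiplicativity of the diamond norm bounds the trajectory error in diamond norm by the sum $\sum_{i}\dnorm{\mathcal{U}_{t_{h}^{(i)}} - \mathcal{K}_{t_{h}^{(i)}}}$ over the $N$ holding-time segments plus the final segment, i.e.\ by $2(N+1)\epsilon_{H}$ with $\epsilon_{H}$ the supremum in the statement. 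Combined with the prefactor $\tfrac12$ and the fact that $f$ rejects every trajectory with $N > r$, the second term is at most $(r+1)\epsilon_{H}$.

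For the first term I would factorize $p(t_{h}^{(1)},\cdots,t_{h}^{(N)};T) = \mathrm{Pr}(N(T)=N)\prod_{i}\mathrm{Pr}(t_{h}^{(i)})$ using the i.i.d.\ holding times, and read off $f$ from equation~\eqref{eq:algpdensity} with its rejection normalization $\mathrm{Pr}(N(T)\leq r)$. Splitting the total-variation sum into $N \leq r$ and $N > r$ and integrating out the holding times collapses the cross terms, leaving $\delta(p,f) \leq \mathrm{Pr}(N(T) > r)$. Since the holding times obey the exponential cumulative distribution of equation~\eqref{eq:algcdf}, Theorem~\ref{thm:pois} gives $N(T)\sim \mathrm{Pois}(\Gamma T)$, and the Chernoff bound for the Poisson upper tail yields $\mathrm{Pr}(N(T)>r) \leq (e\Gamma T/r)^{r}e^{-\Gamma T}$, which combined with the second-term bound completes the proof.

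The main obstacle I anticipate is the bookkeeping in the total-variation estimate: trajectories of different lengths $N$ live in integration domains of different dimension, and the compiled density carries the rejection-induced factor $\mathrm{Pr}(N(T)\leq r)$. One must check that the integrated contributions from $N \leq r$ (carrying $|1-\mathrm{Pr}(N(T)\leq r)|$) and from $N > r$ indeed combine through the absolute values so that the rejection normalization cancels and the total collapses to a sub-unit multiple of the tail $\mathrm{Pr}(N(T)>r)$, rather than leaving residual terms. Once that cancellation is verified, the remainder is a routine application of the diamond-norm inequalities and the Poisson tail bound.
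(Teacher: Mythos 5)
Your proposal follows essentially the same route as the paper's proof: the same diamond-norm decomposition into $\delta(p,f)$ plus the $f$-averaged trajectory error (with the CPTP property giving $\dnorm{\mathcal{T}_{t_{h}^{(1)},\cdots,t_{h}^{(N)}}}=1$), the same telescoping bound of $(N+1)$ Hamiltonian-simulation errors with the jump gadget treated as exact, and the same collapse of the total-variation distance to $\mathrm{Pr}(N(T)>r)$ followed by the Poisson Chernoff tail bound — indeed your tracking of the factor of $2$ between diamond norm and diamond distance is more careful than the paper's own intermediate steps. The only cosmetic difference is that you factor $p$ as $\mathrm{Pr}(N(T)=N)\prod_{i}\mathrm{Pr}(t_{h}^{(i)})$ whereas the paper uses the conditional factorization of equation~\eqref{eq:pdensity}; either way the holding-time integrals reduce to probability masses and the rejection normalization cancels exactly as you anticipated.
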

\begin{proof}
    The goal is to upperbound $d_{\diamond}(e^{T\mathcal{L}}, \mathcal{E}_{T,r})$. From equations~\eqref{eq:constraintqtraj}, and~\eqref{eq:algchannel} it follows that
    \begin{align}
        d_{\diamond}(e^{T\mathcal{L}}, \mathcal{E}_{T,r}) &= \frac{1}{2}\dnorm{e^{T\mathcal{L}} - \mathcal{E}_{T,r}} = \frac{1}{2}\dnorm{\mathbb{E}_{p(T)}\left[\mathcal{T}_{t_{h}^{(1)},\cdots, t_{h}^{(N)}}\right] - \mathbb{E}_{f(T,r)}\left[\mathcal{C}_{t_{h}^{(1)},\cdots, t_{h}^{(N)} }\right]} \nonumber \\
                                                    &\leq \frac{1}{2}\sum_{N=0}^{\infty}\int_{0}^{T}dt_{h}^{(1)} \cdots \int_{0}^{T}dt_{h}^{(N)}\dnorm{p(t_{h}^{(1)}, \cdots, t_{h}^{(N)} ; T)\mathcal{T}_{t_{h}^{(1)},\cdots, t_{h}^{(N)} } - f(t_{h}^{(1)},\cdots, t_{h}^{(N)} ; T, r) \mathcal{C}_{t_{h}^{(1)},\cdots, t_{h}^{(N)} }} \nonumber \\
                                                    &\leq \frac{1}{2}\sum_{N=0}^{\infty}\int_{0}^{T}dt_{h}^{(1)} \cdots \int_{0}^{T}dt_{h}^{(N)} \left(|p(t_{h}^{(1)}, \cdots, t_{h}^{(N)} ; T ) - f(t_{h}^{(1)}, \cdots, t_{h}^{(N)}; T, r )|\dnorm{\mathcal{T}_{t_{h}^{(1)},\cdots, t_{h}^{(N)}}} \right. \nonumber \\ &+ \left. f(t_{h}^{(1)},\cdots, t_{h}^{(N)}; T, r) \dnorm{\mathcal{T}_{t_{h}^{(1)}, \cdots, t_{h}^{(N)}} - \mathcal{C}_{t_{h}^{(1)}, \cdots, t_{h}^{(N)}}}\right)
    \end{align}
where the inequalities follow from the sub-additivity and sub-multiplicativity of the diamond norm. Given that $\mathcal{T}_{t_{h}^{(1)},\cdots, t_{h}^{(N)}}$ is a completely positive and trace preserving map, $\dnorm{\mathcal{T}_{t_{h}^{(1)},\cdots, t_{h}^{(N)} }} = 1$, allowing us to write
\begin{align}
    d_{\diamond}(e^{T\mathcal{L}}, \mathcal{E}_{T,r}) &\leq \frac{1}{2}\sum_{N=0}^{\infty}\int_{0}^{T}dt_{h}^{(1)} \cdots \int_{0}^{T}dt_{h}^{(N)} \left(|p(t_{h}^{(1)},\cdots, t_{h}^{(N)}; T) - f(t_{h}^{(1)},\cdots, t_{h}^{(N)}; T, r)| \right. \nonumber \\ &+ \left. f(t_{h}^{(1)}, \cdots, t_{h}^{(N)}; T, r) \dnorm{\mathcal{T}_{t_{h}^{(1)},\cdots, t_{h}^{(N)}} - \mathcal{C}_{t_{h}^{(1)},\cdots, t_{h}^{(N)}}}\right).
\end{align}
The first term in parenthesis is just the total variational distance between the probability densities $p(T)$ and $f(T, r)$ while the second term is just the error of a trajectory averaged over $f(T, r)$, thus we can bound the error of our simulation as:
\begin{equation}
    \label{eq:qtrajerror}
    d_{\diamond}(e^{T\mathcal{L}},\mathcal{E}_{T, r}) \leq \delta(p(T), f(T,r)) + \frac{1}{2}\mathbb{E}_{f(T,r)}\left[\dnorm{\mathcal{T}_{t_{h}^{(1)},\cdots, t_{h}^{(N)}} - \mathcal{C}_{t_{h}^{(1)},\cdots, t_{h}^{(N)}}}\right].
\end{equation}

Focusing first on the second term in equation~\eqref{eq:qtrajerror} we find 
\begin{align}
    \label{eq:hsimerror}
    \dnorm{\mathcal{T}_{t_{h}^{(1)},\cdots, t_{h}^{(N)}} - \mathcal{C}_{t_{h}^{(1)},\cdots, t_{h}^{(N)}}} &= \dnorm{\mathcal{U}_{(T-\sum_{i=1}^{N}t_{h}^{(i)})} \circ \mathcal{J} \circ \mathcal{U}_{t_{h}^{(N)}} \cdots \mathcal{J} \circ \mathcal{U}_{t_{h}^{(1)}} - \mathcal{K}_{(T - \sum_{i=1}^{N}t_{h}^{(N)})} \circ \mathcal{J} \circ \mathcal{K}_{t_{h}^{(N)}} \cdots \mathcal{J} \circ \mathcal{K}_{t_{h}^{(1)}}} \nonumber \\
                                                                                                          &\leq \dnorm{\mathcal{U}_{(T - \sum_{i=1}^{N}t_{h}^{(i)})} - \mathcal{K}_{(T - \sum_{i=1}^{N}t_{h}^{(i)})}} + \sum_{i=1}^{N}\dnorm{\mathcal{U}_{t_{h}^{(i)}} - \mathcal{K}_{t_{h}^{(i)}}}
\end{align}
which is just the sum over all the errors incurred by each segment of Hamiltonian simulation. Define 
\begin{equation}
    \epsilon_{H} = \sup\left(\left\{d_{\diamond}\left(\mathcal{U}_{t_{h}^{(i)}}, \mathcal{K}_{t_{h}^{(i)}}\right)\right\}_{i = 1}^{N} \cup \left\{d_{\diamond}\left(\mathcal{U}_{T - \sum_{i=1}^{N}t_{h}^{(i)}}, \mathcal{K}_{T - \sum_{i=1}^{N}t_{h}^{(i)}}\right)\right\}\right),
\end{equation}
which represents the largest error incurred by a Hamiltonian simulation segment. Thus equation~\eqref{eq:hsimerror} becomes
\begin{align}
    \dnorm{\mathcal{T}_{t_{h}^{(1)},\cdots, t_{h}^{(N)}} - \mathcal{C}_{t_{h}^{(1)},\cdots, t_{h}^{(N)}}} \leq (N+1)\epsilon_{H}
\end{align}
which implies 
\begin{equation}
    \mathbb{E}_{f(T, r)}\left[\dnorm{\mathcal{T}_{t_{h}^{(1)},\cdots, t_{h}^{(N)}} - \mathcal{C}_{t_{h}^{(1)},\cdots, t_{h}^{(N)}}}\right] \leq \sum_{N=0}^{\infty}\int_{0}^{T}dt_{h}^{(1)} \cdots \int_{0}^{T}dt_{h}^{(N)} f(t_{h}^{(1)}, \cdots, t_{h}^{(N)}; T, r) (N+1)\epsilon_{H}.
\end{equation}
Given that our randomized compilation procedure modeled by $f(T, r)$ never compiles a circuit with more than the maximum of $r$ quantum jumps, equation~\eqref{eq:algpdensity} implies $f(t_{h}^{(1)}, \cdots, t_{h}^{(N)}; T, r)=0$ for $N > r$. Thus we find
\begin{equation}
    \label{eq:circuiterrorupperbound}
    \mathbb{E}_{f(T, r)}\left[\dnorm{\mathcal{T}_{t_{h}^{(1)},\cdots, t_{h}^{(N)}} - \mathcal{C}_{t_{h}^{(1)},\cdots, t_{h}^{(N)}}}\right] \leq (r+1)\epsilon_{H}.
\end{equation}

Next, we focus on upperbounding the total variational distance between $p(T)$ and $f(T,r)$. Using equations~\eqref{eq:pdensity} and~\eqref{eq:algpdensity} we calculate
\begin{align}
    \delta(p(T),f(T,r)) &= \frac{1}{2}\sum_{N=0}^{\infty}\int_{0}^{T}dt_{h}^{(1)} \cdots \int_{0}^{T}dt_{h}^{(N)} |p(t_{h}^{(1)},\cdots, t_{h}^{(N)}; T) - f(t_{h}^{(1)},\cdots, t_{h}^{(N)}; T, r)| \nonumber \\ 
                &= \frac{1}{2}\left(\sum_{N=0}^{r}\int_{0}^{T}dt_{h}^{(1)} \cdots \int_{0}^{T}dt_{h}^{(N)} p(t_{h}^{(1)},\cdots, t_{h}^{(N)}; T)|1  - \mathrm{Pr}(N(T) \leq r)| \right.\nonumber\\
                &\left.+ \sum_{N=r+1}^{\infty}\int_{0}^{T}dt_{h}^{(1)} \cdots \int_{0}^{T}dt_{h}^{(N)} p(t_{h}^{(1)},\cdots, t_{h}^{(N)}; T)\right),
\end{align}
which gives us
\begin{align}
    \delta(p(T),f(T,r)) &\leq \frac{1}{2}\left(\sum_{N=0}^{r}\int_{0}^{T}dt_{h}^{(1)} \cdots \int_{0}^{T}dt_{h}^{(N)} \mathrm{Pr}\left(t_{h}^{(1)},\cdots, t_{h}^{(N)}\big| S^{(N)} \leq T, S^{(N+1)} > T\right) \right. \nonumber \\ 
                & \times \mathrm{Pr}(N(T) = N)|1  - \mathrm{Pr}(N(T) \leq r)| \nonumber\\
                &\left.+ \sum_{N=r+1}^{\infty}\int_{0}^{T}dt_{h}^{(1)} \cdots \int_{0}^{T}dt_{h}^{(N)} \mathrm{Pr}\left(t_{h}^{(1)},\cdots, t_{h}^{(N)} \big| S^{(N)} \leq T, S^{(N+1)} > T \right)\mathrm{Pr}(N(T) = N)\right) \nonumber \\
                &\leq \frac{1}{2}\left(\sum_{N=0}^{r}\mathrm{Pr}(N(T) = N)|1  - \mathrm{Pr}(N(T) \leq r)| + \sum_{N=r+1}^{\infty}\mathrm{Pr}(N(T)=N) \right) \nonumber \\
                &\leq \frac{1}{2}\left(\mathrm{Pr}(N(T) \leq r)|1  - \mathrm{Pr}(N(T) \leq r)|  + \mathrm{Pr}(N(T) > r ) \right) = \frac{1}{2}\left(\mathrm{Pr}(N(T) \leq r) + 1 \right)\mathrm{Pr}(N(T)>r)\nonumber \\
                &\leq \frac{1}{2}\left(2 - \mathrm{Pr}(N(T) > r)\right)\mathrm{Pr}(N(T)>r) \leq \mathrm{Pr}(N(T)>r).
\end{align}
All that is left is to upperbound the right-tail of the distribution for $N(T)$ which is the random variable representing the number of quantum jumps occurring in an interval of time $T$. Applying Theorem~\ref{thm:pois} of Appendix~\ref{app:B} we find  $N(T) \sim \mathrm{Pois}(\Gamma T)$, thus  we can apply the Chernoff bound for the upper tail of the Poisson distribution, Theorem 5.4 of \cite{Mitzenmacher_Upfal_2005}, to find
\begin{equation}
    \label{eq:tvdupperbound}
    \delta(p(T),f(T, r)) \leq \mathrm{Pr}(N(T) > r) \leq \left(\frac{e\Gamma T}{r}\right)^{r}e^{-\Gamma T}.
\end{equation}

Equations~\eqref{eq:circuiterrorupperbound} and~\eqref{eq:tvdupperbound} allow us to write down the final upperbound on the error of our simulation as
\begin{equation}
    d_{\diamond}(e^{T\mathcal{L}}, \mathcal{E}_{T,r}) \leq \left(\frac{e\Gamma T}{r}\right)^{r}e^{-\Gamma T} + \left(r + 1\right)\epsilon_{H}.
\end{equation}
\end{proof}

Armed with the bound on simulation error provided by Lemma~\ref{lm:algerror} we can proceed to proving the main result stated below.
\begin{theorem}
    \label{thm:mainres}
    Let $\mathcal{L}$ be a Lindbladian with Hamiltonian $H$ and jump operators $\{L_{\mu}\}_{\mu=1}^{m}$. Let the jump operators satisfy $\sum_{\mu = 1}^{m}L_{\mu}^{\dagger}L_{\mu} = \Gamma \mathbb{I}$ for some $\Gamma \in \mathbb{R}_{+}$. Given access to an $(\alpha, a, 0)$-block-encoding of the Hamiltonian $U(H)$ and $(\alpha_{\mu}, a, 0)$-block-encodings for each of the jump operator $U(L_{\mu})$ as well as their inverses, then for all $T \in \mathbb{R}_{+} \setminus \{0\}$ and $\epsilon \in (0, 1)$ there exists a quantum algorithm which approximates the action of the channel $e^{T\mathcal{L}}$ up to precision $\epsilon$ in diamond distance requiring in the worst-case
    \begin{equation}
        O\left(\sqrt{\frac{\sum_{\mu=1}^{m}\alpha_{\mu}^{2}}{\Gamma}}\left(\Gamma T + \frac{\log(1 / \epsilon)}{\log(e + \frac{\log(1 / \epsilon)}{\Gamma T})}\right)\right)
    \end{equation}
    queries to each $U(L_{\mu})$ and $U(L_{\mu})^{\dagger}$ and
    \begin{equation}
        O\left((\alpha + \Gamma)T\left[\frac{\log((\alpha + \Gamma) T / \epsilon)}{\log(e + \frac{\log((\alpha + \Gamma) T / \epsilon)}{(\alpha + \Gamma)T})}\right]^{2} \right)
    \end{equation}
    queries to $U(H)$ and $U(H)^{\dagger}$ along with
    \begin{equation}
        O\left((\log(m) + a)\sqrt{\frac{\sum_{\mu=1}^{m}\alpha_{\mu}^{2}}{\Gamma}}(\alpha + \Gamma)T\left[\frac{\log((\alpha + \Gamma) T / \epsilon)}{\log(e + \frac{\log((\alpha + \Gamma) T / \epsilon)}{(\alpha + \Gamma)T})}\right]^{2} \right)
    \end{equation}
    additional 1- and 2-qubit gates and
    \begin{equation}
        O(\log(m) + a)
    \end{equation}
    ancillary qubits.
\end{theorem}
\begin{proof}
    Consider the quantum algorithm described in the section above. Lemma~\ref{lm:algerror} implies that 
    \begin{equation}
        d_{\diamond}(e^{T\mathcal{L}}, \mathcal{E}_{T,r}) \leq \left(\frac{e\Gamma T}{r}\right)^{r} + (r + 1)\epsilon_{H}
    \end{equation}
    where $\mathcal{E}_{T,r}$ denotes the quantum channel our algorithm implements for a fixed simulation time $T$ and maximum number of quantum jumps compiled $r$. Requiring the above simulation error to be upperbounded by $\epsilon$ implies it suffices to constrain
    \begin{equation}
        \label{eq:upalgerror}
        \left(\frac{e\Gamma T}{r}\right)^{r} \leq \frac{\epsilon}{2} \quad \text{ and } \quad (r + 1) \epsilon_{H} \leq \frac{\epsilon}{2}.
    \end{equation}
    Lemma 59 of \cite{gilyen2019quantum} implies the first of the two inequalities above is satisfied by taking
    \begin{equation}
        r = O\left(\Gamma T + \frac{\log(1 / \epsilon)}{\log(e + \frac{\log(1 / \epsilon)}{(\Gamma T)})}\right)
    \end{equation}

    The worst-case quantum circuit built using our randomized compiling procedure requires a number of quantum jumps equal to $r$. Thus, in the worst-case we require $r$ queries to the circuit approximating the quantum jumps, denoted $L$ in FIG~\ref{fig:circuit}, along with Hamiltonian simulation for the sequence of times $\{t^{(i)}_{h}\}_{i = 1}^{r}$. In-order to implement the $r$ quantum jump gadgets in a sampled circuit, Lemma~\ref{lm:jumpgadget} implies that in the worst-case the circuit requires
    \begin{equation}
        \label{eq:Lqueries}
        O\left(r\sqrt{\frac{\sum_{\mu=1}^{m}\alpha_{\mu}^{2}}{\Gamma}}\right) = O\left(\sqrt{\frac{\sum_{\mu=1}^{m}\alpha_{\mu}^{2}}{\Gamma}}\left(\Gamma T + \frac{\log(1 / \epsilon)}{\log(e + \frac{\log(1 / \epsilon)}{(\Gamma T)})}\right)\right)
    \end{equation}
    queries to each $U(L_{\mu})$ and $U(L_{\mu})^{\dagger}$ along with 
    \begin{align}
        \label{eq:jumpgates}
        O\left(r(\log(m) + a)\sqrt{\frac{\sum_{\mu=1}^{m}\alpha_{\mu}^{2}}{\Gamma}}\right) = \nonumber \\
        &O\left((\log(m) + a)\sqrt{\frac{\sum_{\mu=1}^{m}\alpha_{\mu}^{2}}{\Gamma}}\left(\Gamma T + \frac{\log(1 / \epsilon)}{\log(e + \frac{\log(1 / \epsilon)}{(\Gamma T)})}\right)\right)
    \end{align}
    1- and 2-qubit gates and
    \begin{equation}
        \label{eq:jumpancilla}
        O(\log(m) + a)
    \end{equation}
    ancillary qubits assuming we can reset the ancillary qubits between quantum jump gadgets.

    Considering the second inequality in equation~\eqref{eq:upalgerror}, we find that the constraint holds if we take $\epsilon_{H} \leq \epsilon / 2(r + 1)$. Given that $\epsilon_{H}$ is defined as the largest error incurred by one of our Hamiltonian simulation maps $\mathcal{K}_{t_{h}^{(i)}}$, it suffices to take the precision for each our Hamiltonian simulation subroutines to be $\epsilon_{H} = O(\epsilon /r )$. This assures that the error incurred by the $r$ Hamiltonian simulation components of a randomly compiled circuit is of $O(\epsilon)$. Since we implement each of the $r$ Hamiltonian simulation components using the Hamiltonian simulation algorithm of Low et al. \cite{low2019hamiltonian}, we can leverage Theorem 58 of \cite{gilyen2019quantum} to determine the cost of implementing a single component for holding time $t_{h}^{(i)}$ and precision $\epsilon_{H}$ finding that it requires:
    \begin{equation}
        O\left(\alpha t_{h}^{(i)} + \frac{\log(1/\epsilon_{H})}{\log(e + \frac{\log(1 / \epsilon_{H})}{(\alpha t_{h}^{(i)})})}\right)
    \end{equation}
    queries to $U(H)$ and $U(H)^{\dagger}$, along with
    \begin{equation}
        O\left(a\left(\alpha t_{h}^{(i)} + \frac{\log(1/\epsilon_{H})}{\log(e + \frac{\log(1 / \epsilon_{H})}{(\alpha t_{h}^{(i)})})}\right)\right)
    \end{equation}
    1- and 2-qubit gates, and $O(1)$ 
    ancillary qubits. Thus the total number of queries to $U(H)$ required to implement the $r$ Hamiltonian simulation components in our worst-case circuit is
    \begin{align}
        \label{eq:Hqueries}
        O\left(\sum_{i=1}^{r}\left(\alpha t_{h}^{(i)} + \frac{\log(r/\epsilon)}{\log(e + \frac{\log(r / \epsilon)}{(\alpha t_{h}^{(i)})})}\right)
        \right)&= O\left(\alpha T + r\frac{\log(r / \epsilon)}{\log(e + \frac{\log(r / \epsilon)}{\alpha T})}\right) \nonumber \\
               &= O\left(\alpha T + r\frac{\log(\Gamma T /\epsilon^{2})}{\log(e + \frac{\log(\Gamma T / \epsilon^{2})}{\alpha T})}\right) \nonumber \\
               &= O\left(\alpha T + \left(\Gamma T + \frac{\log(1 / \epsilon)}{\log(e + \frac{\log(1/\epsilon)}{\Gamma T})}\right)\frac{\log(\Gamma T / \epsilon)}{\log(e + \frac{\log(\Gamma T / \epsilon)}{\alpha T})}\right) \nonumber \\
               &= O\left(\alpha T + \Gamma T \left[\frac{\log(\Gamma T / \epsilon)}{\log(e + \frac{\log(\Gamma T / \epsilon)}{(\alpha + \Gamma)T})}\right]^{2} \right) \nonumber \\
               & = O\left((\alpha + \Gamma)T\left[\frac{\log((\alpha + \Gamma) T / \epsilon)}{\log(e + \frac{\log((\alpha + \Gamma) T / \epsilon)}{(\alpha + \Gamma)T})}\right]^{2} \right),
    \end{align}
    where we have used the fact that $\log(x)/\log(e + \frac{\log(x)}{\gamma})$ is monotonic increasing on $x \in \mathbb{R}_{+}\setminus \{0\}$ for any $\gamma \in \mathbb{R}_{+} \setminus \{0\}$. Similarly the total number of 1- and 2-qubit gates needed to implement the $r$ Hamiltonian simulation components follows as
    \begin{align}
        \label{eq:Hgates}
        O\left(a\sum_{i=1}^{r}\left(\alpha t_{h}^{(i)} + \frac{\log(r/\epsilon)}{\log(e + \frac{\log(r / \epsilon)}{(\alpha t_{h}^{(i)})})}\right)
        \right)&= O\left(a(\alpha + \Gamma)T\left[\frac{\log((\alpha + \Gamma) T / \epsilon)}{\log(e + \frac{\log((\alpha + \Gamma) T / \epsilon)}{(\alpha + \Gamma)T})}\right]^{2} \right).
    \end{align}
    Finally, assuming we can reset the ancillary qubits between Hamiltonian simulation components implies our ancillary qubit cost to implement the $r$ Hamiltonian simulation components is still $O(1)$.

    Combining equations~\eqref{eq:jumpgates}, and \eqref{eq:Hgates}, gives us the total additional 1- and 2-qubit gate cost for implementing the worst case circuit as:
    \begin{equation}
        O\left((\log(m) + a)\sqrt{\frac{\sum_{\mu=1}^{m}\alpha_{\mu}^{2}}{\Gamma}}(\alpha + \Gamma)T\left[\frac{\log((\alpha + \Gamma) T / \epsilon)}{\log(e + \frac{\log((\alpha + \Gamma) T / \epsilon)}{(\alpha + \Gamma)T})}\right]^{2} \right).
    \end{equation}
    And the total ancillary qubit cost for implementing the worst case circuit as:
    \begin{equation}
        O(\log(m) + a).
    \end{equation}
    Finally equations~\eqref{eq:Hqueries} and~\eqref{eq:Lqueries} give us the total query costs to $U(H)$, each $U(L_{\mu})$, and their inverses respectively. The proof of the statement follows.
\end{proof}

\section{Characterization of the restricted Lindbladians}
\label{sec:char}
To begin, let us define the set $\algL$ as the set of all Lindbladians with a representation given by a Hamiltonian $H$ and jump operators $\{L_{\mu}\}_{\mu=1}^{m}$ which satisfy
\begin{equation}
    \label{eq:algconstraint:repeated}
    \sum_{\mu=1}^{m}L_{\mu}^{\dagger}L_{\mu} = \Gamma \mathbb{I},
\end{equation}
where $\Gamma \in \mathbb{R}_{+}$. Thus $\algL$ consists of all possible Lindbladians our algorithm can simulate. The lower-bound on the number of queries needed to a block-encoding of a Lindbladian's jump operator for Lindbladians in $\algL$ follows directly from a result by Gao et al. given below.
\begin{theorem}[Theorem D.4 from \cite{gao2026levy}] Let $\mathcal{L}$ denote the following Lindbladian,
    \label{thm:gao}
    \begin{equation}
        \mathcal{L}(\rho) = \sum_{j=1}^{N}p_{j}e^{iHs_{j}}\rho e^{-iHs_{j}} - \rho
    \end{equation}
    where $H$ is a Hamiltonian with $\norm{H} \leq 1$, $p_{j} \geq 0$ and $\sum_{j=1}^{N}p_{j} = 1$ and $s_{j}$ are fixed constants. Let $U(H)$ be an $(1, a, 0)$-block-encoding of $H$, then simulating the evolution of $\mathcal{L}$ for time $T$ requires $\Omega(T)$ queries to $U(H)$, even for the case of $N=1$ where
    \begin{equation}
        \mathcal{L}(\rho) = e^{iH}\rho e^{-iH} - \rho.
    \end{equation}
\end{theorem}

The above result by Gao et al. implies the following corollary.

\begin{corollary}
    There exists an $\mathcal{L} \in \{\mathcal{L}_{c}\}$ with a single jump operator $L$ such that to simulate the evolution of $\mathcal{L}$ for time $T$ requires $\Omega(T)$ queries to a block-encoding of $U(L)$.
\end{corollary}
\begin{proof}
    Choose the Lindbladian $\mathcal{L}$ generated by no Hamiltonian and $L = e^{iH}$ where $H$ satisfies Theorem~\ref{thm:gao}. $L^{\dagger}L = \mathbb{I}$, thus $\mathcal{L} \in \algL$. We require $\Theta(1)$ queries to $U(H)$ in-order construct $U(L)$ as input to a Lindbladian simulation algorithm. Theorem~\ref{thm:gao} implies we require $\Omega(T)$ queries to $U(H)$ in order to simulate $\mathcal{L}$ for time $T$, thus we require $\Omega(T)$ queries to $U(L)$ in-order to simulate $\mathcal{L}$ for time $T$.
\end{proof}

For purely dissipative Lindbladians, Lindbladians without a Hamiltonian, the above statement implies our algorithm achieves the optimal jump operator query complexity for our restricted set of Lindbladians $\algL$.

Given our algorithm presented in the previous section, one possible avenue towards lifting our constraint of equation~\eqref{eq:algconstraint} would be to ``embed" an arbitrary Lindbladian into one which satisfies our algorithm's constraint. Such an ``embedding" would allow one to leverage the above algorithm as a black-box to simulate arbitrary Lindbladians at the cost of the embedding procedure. In what follows we provide  a characterization of our restricted set Lindbladians and various results suggesting that a black-box ``embedding" of an arbitrary Lindbadian into this restricted family is not possible.

It is well known that given a Lindbladian $\mathcal{L}$ the representation of $\mathcal{L}$ given by a Hamiltonian $H$ and jump operators $\{L_{\mu}\}_{\mu=1}^{m}$ is not unique. The Lindbladian superoperator $\mathcal{L}$ is invariant under a unitary transformation of its jump operators given by
\begin{equation}
    \label{eq:ut}
    L_{\mu}' = \sum_{\alpha}u_{\mu\alpha}L_{\alpha}
\end{equation}
where $u_{\mu\alpha}$ are the matrix elements of unitary transformation as well as an inhomogeneous transformation of both its Hamiltonian and jump operators given by
\begin{align}
    \label{eq:it}
    L_{\mu}' &= L_{\mu} + a_{\mu}\mathbb{I} \\
    \label{eq:it2}
    H' &= H + \frac{1}{2i}\sum_{\mu}\left(a_{\mu}^{*}L_{\mu} - a_{\mu}L_{\mu}^{\dagger}\right) + b\mathbb{I},
\end{align}
with $a_{\mu} \in \mathbb{C}$ and $b\in \mathbb{R}$. It is manifest that equation~\eqref{eq:algconstraint} is invariant under transformations given by equation~\eqref{eq:ut} but the inhomogeneous transformation given by equations~\eqref{eq:it} and~\eqref{eq:it2} may break this invariance. Given our task of simulating Lindbladians with a representation under the constraint of equation~\eqref{eq:algconstraint}, one may ask whether any Lindbladian can be represented in-terms of jump operators which satisfy equation~\eqref{eq:algconstraint}. This question is answered in the negative by the following lemma.
\begin{lemma}
    \label{lm:subset}
    There exists a Lindbladian $\mathcal{L} \notin \algL$
\end{lemma}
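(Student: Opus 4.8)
The plan is to exhibit a single concrete Lindbladian and prove that no representation of it can meet the constraint \eqref{eq:algconstraint}. I would take the single-qubit amplitude-damping generator $\mathcal{L}(\rho) = L\rho L^\dagger - \tfrac{1}{2}\{L^\dagger L, \rho\}$ with $H = 0$ and the single jump operator $L = \ket{0}\bra{1}$, so that in this representation $\sum_\mu L_\mu^\dagger L_\mu = L^\dagger L = \ket{1}\bra{1}$, which is plainly not proportional to $\mathbb{I}$. Membership in $\algL$ only requires that \emph{some} representation of $\mathcal{L}$ obey the constraint, so the real content is to rule out every other representation.

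Here I would appeal to the classification recalled above the lemma: every representation of a fixed $\mathcal{L}$ is obtained from any other by a unitary mixing \eqref{eq:ut} together with an inhomogeneous shift \eqref{eq:it}--\eqref{eq:it2}, after padding the shorter list of jump operators with zero operators. The first simplification is that the unitary mixing leaves the quantity $\sum_\mu L_\mu^\dagger L_\mu$ exactly invariant, since $\sum_\mu L_\mu'^\dagger L_\mu' = \sum_{\alpha,\beta}\big(\sum_\mu u_{\mu\alpha}^* u_{\mu\beta}\big) L_\alpha^\dagger L_\beta = \sum_\alpha L_\alpha^\dagger L_\alpha$ by unitarity of $u$. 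Thus the only freedom capable of altering this operator is the inhomogeneous shift $L_\mu' = L_\mu + a_\mu\mathbb{I}$, and I may analyze that alone.

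I would then carry out the shift explicitly. After padding and mixing, every jump operator of a general representation has the form $L_\mu' = u_{\mu 1}L + a_\mu\mathbb{I}$, because $L$ is the only nonzero operator being mixed. A direct expansion gives $\sum_\mu L_\mu'^\dagger L_\mu' = L^\dagger L + cL^\dagger + c^* L + s\mathbb{I}$ with $c = \sum_\mu u_{\mu 1}^* a_\mu \in \mathbb{C}$ and $s = \sum_\mu |a_\mu|^2 \ge 0$. Since $L^\dagger L = \ket{1}\bra{1}$ is diagonal while $L$ and $L^\dagger$ are purely off-diagonal, this evaluates to $\begin{pmatrix} s & c^* \\ c & 1+s \end{pmatrix}$, whose two diagonal entries differ by $1$ for every admissible $c$ and $s$. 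Proportionality to $\mathbb{I}$ would force $c = 0$ and $s = 1+s$, i.e.\ $0 = 1$, so no representation satisfies the constraint and $\mathcal{L}\notin\algL$.

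The step I expect to bear the most weight is the appeal to completeness of the gauge freedom in the second paragraph: one must be confident that the unitary and inhomogeneous transformations \eqref{eq:ut}--\eqref{eq:it2}, taken together and with zero-padding allowed, genuinely exhaust all representations of $\mathcal{L}$, so that checking these orbits is sufficient and not merely necessary. Once this is granted the argument is essentially free, because the unitary invariance collapses the problem to the single inhomogeneous-shift family, and the residual obstruction — a fixed gap of $1$ between the diagonal entries that no shift can close — is immediate.
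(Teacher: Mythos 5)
Your proof is correct and takes essentially the same route as the paper's: the same counterexample (the amplitude-damping generator with $L=\ket{0}\bra{1}$, $H=0$), the same reduction via unitary invariance of $\sum_{\mu}L_{\mu}^{\dagger}L_{\mu}$ to the inhomogeneous shift alone, and the same contradiction from the irremovable gap of $1$ between the diagonal entries. If anything you are slightly more careful than the paper, which asserts the unitary invariance as ``manifest'' and analyzes only a single unpadded jump operator, whereas you verify the invariance explicitly and handle zero-padding with several shift parameters $a_{\mu}$; both arguments ultimately rest on the same standard fact that the transformations in the paper's equations (10)--(12) exhaust the gauge freedom of a Lindbladian representation.
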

\begin{proof}
    % We proceed by proof of a contradiction. Assume for all Lindbladians $\mathcal{L}$ there exists a representation given by a Hamiltonian $H$ and jump operators $\{L_{\mu}\}_{\mu=1}^{m}$ such that the jump operators satisfy equation~\eqref{eq:algconstraint}. Given that equation~\eqref{eq:algconstraint} is invariant with respect to the transformation described by equation~\eqref{eq:ut}, without loss of generality we restrict our analysis to transformations given by equations~\eqref{eq:it} and~\eqref{eq:it2}. Thus for an arbitrary representation of a Lindbladian $\mathcal{L}$, there exist an inhomogeneous transformation mapping the representation to one with jump operators which satisfy equation~\eqref{eq:algconstraint}.

    We proceed by proof by contradiction. Let $\mathcal{L}$ be the Lindbladian which generates the amplitude damping channel, represented by the jump operators $L=\ket{0}\bra{1}$ and Hamiltonian $H=0$. Note, such a representation of $\mathcal{L}$ does not satisfy equation~\eqref{eq:algconstraint}. Our assumption $\mathcal{L} \in \{\mathcal{L}\}_{c}$ implies there exists a transformation of the representation of $\mathcal{L}$ such that equation~\eqref{eq:algconstraint} is satisfied. Equation~\eqref{eq:algconstraint} is invariant under transformations of the form of equation~\eqref{eq:ut} thus without loss of generality we restrict to considering only inhomogeneous transformations given by equations~\eqref{eq:it} and~\eqref{eq:it2}.

    Equations~\eqref{eq:it} and~\eqref{eq:it2} imply under an arbitrary inhomogeneous transformation 
    \begin{align}
        L &\mapsto L' = \ket{0}\bra{1} + a\mathbb{I} \\
        H &\mapsto H' = \frac{1}{2i}\left(a^{*} \ket{0}\bra{1} - a \ket{1}\bra{0}\right) + b \mathbb{I}
    \end{align}
    Let $a$ and $b$ denote the parameters of the inhomogeneous transformation mapping the original representation of the amplitude damping Lindbladian into one which satisfies equation~\eqref{eq:algconstraint}.
    Checking equation~\eqref{eq:algconstraint} we find 
    \begin{equation}
        L'^{\dagger}L' = (1 + |a|^{2})\ket{1}\bra{1} + a\ket{1}\bra{0} + a^{*}\ket{0}\bra{1} + |a|^{2}\ket{0}\bra{0} = \Gamma \mathbb{I}
    \end{equation}
    for some $\Gamma \in \mathbb{R}_{+}$. The above equation implies
    \begin{equation}
        \begin{cases}
            1 + |a|^{2} = \Gamma \\
            |a|^{2} = \Gamma \\
            a = 0
        \end{cases}
    \end{equation}
    which requires $1 + \Gamma = \Gamma$, a contradiction.
\end{proof}

Lemma~\ref{lm:subset} implies $\algL$ is a strict subset of all Lindbladians, indicating our algorithm is not universal in its applicability. Although Lemma~\ref{lm:subset} does imply there exists Lindbladians which our algorithm cannot simulate, the set of $\algL$ may become universal with respect to inducing Lindbladian dynamics on a subsystem. In essence, it may be possible to take an $\mathcal{L} \notin \algL$ and embed its dynamics inside a $\mathcal{K} \in \algL$ which generates dynamics on a larger system. The strongest way to formalize this statement is by requiring for all Lindbladians $\mathcal{L}$ there exists a $\mathcal{K} \in \algL$ such that for all time $t$ and for some state $\omega_{A}$
\begin{equation}
    \Tr_{A} [e^{t\mathcal{K}}(\omega_{A}\otimes \rho)] = e^{t\mathcal{L}}(\rho).
\end{equation}
Such an embedding procedure can be shown to be impossible by the following statement.
\begin{theorem}
    \label{lm:noembed}
    For all $t \in \mathbb{R}_{+}$, let  $\Tr_{A} [e^{t\mathcal{K}}(\omega_{A}\otimes \rho)] = e^{t\mathcal{L}}(\rho)$. If $\mathcal{K} \in \algL$ then $\mathcal{L} \in \algL$.
\end{theorem}
\begin{proof}
    Assume for all $t \in \mathbb{R}_{+}$, $\Tr_{A} [e^{t\mathcal{K}}(\omega_{A}\otimes \rho)] = e^{t\mathcal{L}}(\rho)$. Let $\mathcal{K} \in \algL$, taking the derivative of the previous equation and evaluating at $t=0$ we find
    \begin{equation}
        \mathcal{L}(\rho) = \Tr_{A}\left[\mathcal{K}(\omega_{A} \otimes \rho)\right] = \Tr_{A}\left\{-i[H, \omega_{A} \otimes \rho] + \Gamma(\mathcal{R}(\omega_{A}\otimes\rho) - \omega_{A}\otimes\rho) \right\}
    \end{equation}
    where the last equality follows from applying Lemma~\ref{lm:altchar} found in Appendix A. Simplifying the above equation we find
    \begin{equation}
        \mathcal{L}(\rho) = -i\left(\Tr_{A}(H\omega_{A}\otimes\rho) - \Tr_{A}(\omega_{A}\otimes\rho H)\right) - \Gamma(\mathcal{R}_{A}(\rho) - \rho)
    \end{equation}
    where the map $\mathcal{R}_{A} \in \mathbf{CPTP}$ and is defined by $\mathcal{R}(\rho) = \Tr_{A}\left[\mathcal{R}(\omega_{A}\otimes \rho)\right]$. Using Lemma~2 from Appendix B of \cite{vomende2023quantum} we find
    \begin{equation}
        \mathcal{L}(\rho) = -i[H_{A}, \rho] + \Gamma(\mathcal{R}_{A}(\rho) - \rho)
    \end{equation}
    where $H_{A}$ is defined as the unique operator satisfying $\Tr(H_{A}\rho) = \Tr(H(\omega_{A}\otimes\rho))$, guaranteed by Lemma~2 from Appendix B of \cite{vomende2023quantum}. Thus we find 
    \begin{equation}
        \mathcal{L} = -i\mathrm{ad}_{H_{A}} + \Gamma(\mathcal{R}_{A} + \mathcal{I}),
    \end{equation}
    which implies $\mathcal{L} \in \algL$ by Lemma~\ref{lm:altchar} from Appendix A.
\end{proof}

Finally, we may wonder if relaxing the embedding procedure described above can make $\algL$ universal. At the bare minimum we only require that at the final time of simulation the quantum channel induced on the subsystem matches the quantum channel generated by the targeted Lindbladian dynamics. More formally we require, for all Lindbladians $\mathcal{L}$ there exists a $\mathcal{K} \in \algL$ such that for some $\omega_{A}$
\begin{equation}
    \Tr_{A}[e^{t\mathcal{K}}(\omega_{A} \otimes \rho)] = e^{t\mathcal{L}}(\rho).
\end{equation}
Interestingly, the embedding procedure described above always exists due to the Stinespring dilation theorem, although this result is not as useful as suggested. The Stinespring dilation theorem always assures us there exists some unitary, generated by some Hamiltonian, which can implement an arbitrary quantum channel on a subsystem given the ancillary Hilbert space is large enough. Although this result tells us a $\mathcal{K} \in \algL$ exists which can satisfy the embedding procedure described above, it does not tell us \emph{how} to construct this $\mathcal{K}$ from the final time $t$ and target Lindbladian $\mathcal{L}$. In addition, this result only informs us that the embedding ability of Hamiltonian dynamics, a restricted subset of the dynamics generated by $\algL$, is universal. Given that an arbitrary Lindbladian $\mathcal{L} \in \algL$ may be expressed as a sum of a Hamiltonian portion and dissipator with jump operators satisfying equation~\eqref{eq:algconstraint}, one may wonder whether pure dissipators with jump operators satisfying equation~\eqref{eq:algconstraint} can also embed arbitrary Lindbladian dynamics. Such a conjecture can be proven false as an implication to the following Lemma.
\begin{lemma}
    \label{lm:globalprop}
    Let $\mathcal{D} \in \algL$ where the Hamiltonian $H=0$ when equation~\eqref{eq:algconstraint} is satisfied with $\Gamma \neq 0$. Then for all $t \in \mathbb{R}_{+} \setminus \{0\}$
    \begin{equation}
        \label{eq:specialform}
        e^{t\mathcal{D}} = e^{-t\Gamma}\mathcal{I} + (1 - e^{-t\Gamma})\Phi_{t}
    \end{equation}
    where $\Phi_{t} \in \mathbf{CPTP}$
\end{lemma}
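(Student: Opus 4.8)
The plan is to reduce the exponential of $\mathcal{D}$ to a Poisson mixture of channels. First I would invoke Lemma~\ref{lm:altchar} from Appendix A: since $H=0$ and $\sum_{\mu}L_{\mu}^{\dagger}L_{\mu} = \Gamma\mathbb{I}$ with $\Gamma \neq 0$, this lets me write $\mathcal{D} = \Gamma(\mathcal{R} - \mathcal{I})$ for some $\mathcal{R} \in \mathbf{CPTP}$. Because the identity superoperator $\mathcal{I}$ commutes with every superoperator, the exponential factorizes cleanly,
\begin{equation}
    e^{t\mathcal{D}} = e^{t\Gamma\mathcal{R} - t\Gamma\mathcal{I}} = e^{-t\Gamma}\, e^{t\Gamma\mathcal{R}}.
\end{equation}

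Next I would expand the remaining exponential as a power series and peel off the $k=0$ term,
\begin{equation}
    e^{t\mathcal{D}} = e^{-t\Gamma}\sum_{k=0}^{\infty}\frac{(t\Gamma)^{k}}{k!}\mathcal{R}^{k} = e^{-t\Gamma}\mathcal{I} + e^{-t\Gamma}\sum_{k=1}^{\infty}\frac{(t\Gamma)^{k}}{k!}\mathcal{R}^{k}.
\end{equation}
The first term already matches the $e^{-t\Gamma}\mathcal{I}$ contribution in~\eqref{eq:specialform}. For the second term I would use $\sum_{k=1}^{\infty}(t\Gamma)^{k}/k! = e^{t\Gamma}-1$ to factor out the scalar prefactor $e^{-t\Gamma}(e^{t\Gamma}-1) = 1 - e^{-t\Gamma}$ and define
\begin{equation}
    \Phi_{t} = \frac{1}{e^{t\Gamma}-1}\sum_{k=1}^{\infty}\frac{(t\Gamma)^{k}}{k!}\mathcal{R}^{k},
\end{equation}
so that $e^{t\mathcal{D}} = e^{-t\Gamma}\mathcal{I} + (1 - e^{-t\Gamma})\Phi_{t}$ holds identically.

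It then remains to verify that $\Phi_{t} \in \mathbf{CPTP}$, which is where the small amount of genuine content lives. The coefficients $p_{k} = (t\Gamma)^{k}/\big(k!\,(e^{t\Gamma}-1)\big)$ form a shifted (zero-truncated) Poisson distribution: for $t > 0$ and $\Gamma > 0$ each $p_{k} \geq 0$ and $\sum_{k=1}^{\infty}p_{k} = 1$, so $\Phi_{t} = \sum_{k\geq 1}p_{k}\mathcal{R}^{k}$ is a convex combination of the maps $\mathcal{R}^{k}$. Since $\mathcal{R} \in \mathbf{CPTP}$ and $\mathbf{CPTP}$ is closed under composition, each $\mathcal{R}^{k}$ is CPTP, and because $\mathbf{CPTP}$ is convex, the mixture $\Phi_{t}$ is CPTP as well. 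The only subtlety to guard against is the degenerate endpoint $t = 0$, where the normalization $e^{t\Gamma}-1$ vanishes, but this is excluded by the hypothesis $t \in \mathbb{R}_{+}\setminus\{0\}$. The main obstacle is thus merely bookkeeping --- recognizing the truncated-Poisson weights and invoking closure of $\mathbf{CPTP}$ under composition and convex combination --- rather than any real analytic difficulty.
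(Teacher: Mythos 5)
Your proof is correct and takes essentially the same route as the paper's: both invoke Lemma~\ref{lm:altchar} to write $\mathcal{D} = \Gamma(\mathcal{R} - \mathcal{I})$, expand the exponential as a power series, peel off the $n=0$ term, and define $\Phi_{t}$ as the normalized tail of the series. The only difference is cosmetic, in the final verification: the paper checks complete positivity and trace preservation separately (the latter via the adjoint computation $\Phi_{t}^{\dagger}(\mathbb{I}) = \mathbb{I}$), whereas you fold both into the single observation that $\Phi_{t}$ is a convex combination of the CPTP maps $\mathcal{R}^{k}$ with zero-truncated-Poisson weights.
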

\begin{proof}
    Let $\mathcal{D} \in \algL$ where the Hamiltonian $H=0$ when equation~\eqref{eq:algconstraint} is satisfied. Lemma~\ref{lm:altchar} in Appendix B implies we can write $\mathcal{D} = \Gamma (\mathcal{R} - \mathcal{I})$ where $\mathcal{R} \in \mathbf{CPTP}$. Thus we find
    \begin{equation}
        e^{t\mathcal{D}} = e^{t\Gamma(\mathcal{R} - \mathcal{I})} =  e^{-t\Gamma}\sum_{n=0}^{\infty}\frac{(t\Gamma)^{n}}{n!}\mathcal{R}^{\circ n} = e^{-t\Gamma}\left(\mathcal{I} + \sum_{n=1}^{\infty}\frac{(t\Gamma)^{n}}{n!}\mathcal{R}^{\circ n}\right) = e^{-t\Gamma}\mathcal{I} + (1 - e^{-t\Gamma})\Phi_{t}
    \end{equation}
    where we have defined the map $\Phi_{t} = (e^{t\Gamma} - 1)^{-1}\sum_{n=1}^{\infty}\frac{(t\Gamma)^{n}}{n!}\mathcal{R}^{\circ n }$. Since $\mathcal{R} \in \mathbf{CPTP}$ and $\Gamma \in \mathbb{R}_{+} \setminus \{0\}$, then for all $t \in \mathbb{R}_{+} \setminus \{0\}$, $\Phi_{t} \in \mathbf{CP}$. To prove trace preservation note,
    \begin{equation}
        \Phi_{t}^{\dagger}(\mathbb{I}) = (e^{t\Gamma} - 1)^{-1} \sum_{n=1}^{\infty}\frac{t\Gamma}{n!}\mathcal{R}^{\dagger \circ n}(\mathbb{I}) = (e^{t\Gamma} - 1)^{-1}\sum_{n=1}^{\infty}\frac{(t\Gamma)^{n}}{n!} \mathbb{I} = \mathbb{I}
    \end{equation}
    where the second equality follows from $\mathcal{R} \in \mathbf{CPTP}$. Thus $\Phi_{t} \in \mathbf{UCP}$ implying $\Phi_{t} \in \mathbf{CPTP}$.
\end{proof}
\begin{theorem}
    \label{cor:weakdiss}
    There exists a Lindbladian $\mathcal{L} \notin \algL$ such that there does not exist a $\mathcal{D} \in \algL$ where the Hamiltonian $H=0$ when equation~\eqref{eq:algconstraint} is satisfied and 
    \begin{equation}
        \Tr_{A}[e^{t\mathcal{D}}(\omega_{A}\otimes \rho)] = e^{t\mathcal{L}}(\rho).
    \end{equation}
\end{theorem}
\begin{proof}
    We proceed by proof by contradiction. Let $\mathcal{L}$ be the Lindbladian which generates the amplitude damping channel, represented by the jump operators $L = \ket{0}\bra{1}$ and Hamiltonian $H=0$. For all $t\in \mathbb{R}_{+}$, the channel $e^{t\mathcal{L}}$ has a Kraus representation given by 
    \begin{equation}
        \begin{cases}
            K_{0} = \ket{0}\bra{0} + \sqrt{1 - p(t)} \ket{1}\bra{1} \\
            K_{1} = \sqrt{p(t)}\ket{0}\bra{1}
        \end{cases}
    \end{equation}
    where $p(t) = 1 - e^{-t \Gamma}$. Note, the set of operators given by $\{K_{\mu}^{\dagger}K_{\nu}\}_{\mu,\nu = 0}^{1}$ is linearly independent for all $t \in \mathbb{R}_{+} \setminus \{0\}$ implying the channel $e^{t\mathcal{L}}$ is an extreme point in the set of all quantum channels acting on a single qubit. Assuming there exists a $\mathcal{D} \in \algL$ where the Hamiltonian $H=0$ when equation~\eqref{eq:algconstraint} is satisfied and $\Tr_{A}[e^{t\mathcal{D}}(\omega_{A}\otimes \rho)] = e^{t\mathcal{L}}(\rho)$, implies we can apply Lemma~\ref{lm:globalprop} and write
    \begin{align}
        e^{t\mathcal{L}}(\rho) &= \Tr_{A}\left[e^{t\mathcal{D}}(\omega_{A}\otimes\rho)\right] = \Tr_{A}\left[(e^{-t\Gamma}\mathcal{I} + (1 - e^{-t\Gamma})\Phi_{t})[\omega_{A}\otimes\rho]\right] \nonumber \\
                               &= e^{-t\Gamma}\rho + (1 - e^{-t\Gamma})\Tr_{A}\left[\Phi_{t}(\omega_{A}\otimes\rho)\right]
    \end{align}
    for all $t \in \mathbb{R} \setminus \{0\}$.
    Defining the map $\Phi_{t}^{(A)}(\rho) \in \mathbf{CPTP}$ as $\Phi_{t}^{(A)}(\rho)  = \Tr_{A}\left[\Phi_{t}(\omega_{A}\otimes\rho)\right]$ we can write 
    \begin{equation}
        e^{t\mathcal{L}} = e^{-t\Gamma} \mathcal{I} + (1 - e^{-t\Gamma}) \Phi_{t}^{(A)},
    \end{equation}
    which is a convex combination of single qubit quantum channels. Thus we have arrived at a contradiction given that $e^{t\mathcal{L}}$ has already been shown to be an extreme point in the set of quantum channels for $t\in\mathbb{R} \setminus \{0\}$.
\end{proof}

Given the results discussed above, the set of Lindbladians for which our algorithm can be applied to is quite rigid. Lemma~\ref{lm:subset} confirms our intuition, that our algorithms restriction does limit its applicability to a subset of all Lindbladians. In addition, Lemma~\ref{lm:noembed} and Theorem~\ref{cor:weakdiss} imply the ability of this subset to embed arbitrary Lindbladian dynamics is restricted as well. Such results suggest, that extending the applicability of the algorithm will likely involve modifying its structure as opposed to using it as a black-box subroutine.

\section{Discussion}
\label{sec:disc}

In this work we have presented a novel quantum algorithm for simulating time-independent Lindblad master equations inspired by the classical Monte-Carlo wavefunction algorithms for simulating Lindblad master equations. Interestingly, for the master equations it applies to, our algorithm achieves the additive scaling between $T$ and $\epsilon$ with respect to queries to the Lindbladian's jump operators. As far as we are aware, our algorithm is the first to achieve such a result. Heuristically, the key factor behind our algorithms improved jump operator query complexity comes from unraveling the Lindbladian dynamics as a stochastic process over Hilbert space. More specifically, because the unraveling chosen has quantum jump events driven by a Poisson process, the expected number of jump events tends to grow linearly with respect to the total simulation time, with the right-tail of the distribution decaying exponentially. Given that each jump event entails one query to the Lindbladian's jump operators, implies that the total number of queries tends to grow linearly with respect to the time interval, with a small additive perturbation occurring from the exponentially decaying right-tail of the distribution. This allows us to significantly minimize the query complexity of our algorithm to be as close to linear as possible.

Although our query complexity to the Lindbladian's jump operators is additive with respect to $T$ and $\epsilon$, the algorithm we have presented above is still sub-optimal across several areas. For one, it requires the Lindbladian to satisfy the constraint of equation~\eqref{eq:algconstraint}. This suggests lifting the constraint of equation~\eqref{eq:algconstraint} as a first avenue for improvement. Another future direction involves improving the algorithms query complexity to the Lindbladian's Hamiltonian. Interestingly, there is a mismatch between our algorithm's Hamiltonian query complexity and its jump operator query complexity. On one hand, our jump operator query complexity is additive between $T$ and $\epsilon$, but we sacrifice such scaling for the Hamiltonian query complexity. It would be interesting to explore whether the Hamiltonian query complexity can be improved or whether there exists a trade-off between the two resources.

\begin{acknowledgments}
This work is supported by Sandia National Laboratories’ Laboratory Directed Research and Development program (Contract \#2534192). Additional support by DOE's Express: 2023 Exploratory Research For Extreme-scale Science Program under Award Number DE-SC0024685 is acknowledged.

\vspace{3mm}
\textbf{Note Added:} This work was presented at QSim August 2025. Right before submitting this manuscript to Arxiv, a different quantum algorithm was announced to sample from Gibbs states with a Lindbladian simulation subroutine achieving similar scaling, but for only jump operators that are unitary and without considering Hamiltonian dynamics \cite{shang2025exponentiallindbladianfastforwarding}. The additive query complexity in \cite{shang2025exponentiallindbladianfastforwarding} follows from cleverly truncating the Taylor series for the exponential of the full Lindbladian while in our algorithm the additive complexity follows from simulating only the quantum trajectories associated with the Lindbladian and not the full Lindbladian.

\end{acknowledgments}

\bibliographystyle{apsrev4-2}
\bibliography{refs}

\clearpage

\setcounter{table}{0}
\renewcommand{\thetable}{S\arabic{table}}%
\setcounter{figure}{0}
\renewcommand{\thefigure}{S\arabic{figure}}%
\setcounter{section}{0}
\setcounter{equation}{0}
\renewcommand{\theequation}{S\arabic{equation}}%

%\onecolumngrid

\appendix

\section{Proof of Lemma 11.}
\label{app:A}
In this section we provide an alternative way to characterize the set of Lindbladians for which our algorithm can be applied to. As discussed in the main text our algorithm is applicable only to Lindbladians with jump operators $\{L_{\mu}\}_{\mu=1}^{m}$ which satisfy
\begin{equation}
    \label{eq:algconst2}
    \sum_{\mu=1}^{m}L_{\mu}^{\dagger}L_{\mu} = \Gamma \mathbb{I}
\end{equation}
for some $\Gamma \in \mathbb{R}_{+}$. Defining the set of Lindbladians which have representation satisfying the above equation as $\algL$ we proceed to prove the following Lemma giving us an alternative way to characterize Lindbladians which satisfy our algorithms constraint.
\begin{lemma}
    \label{lm:altchar}
    $\mathcal{L} \in \algL$ if and only if there exists an $\mathcal{R} \in \mathbf{CPTP}$ such that $\mathcal{L} = -i\mathrm{ad}_{H} + \Gamma(\mathcal{R} - \mathcal{I})$.
\end{lemma}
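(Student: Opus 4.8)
The plan is to prove both directions by an explicit rescaling of Kraus operators, observing that the Hamiltonian term $-i\mathrm{ad}_{H}$ is common to both sides and never couples to the dissipative part. Hence the entire content of the lemma reduces to matching the Lindblad dissipator $\sum_{\mu}L_{\mu}\bullet L_{\mu}^{\dagger} - \frac{1}{2}\{L_{\mu}^{\dagger}L_{\mu}, \bullet\}$ against $\Gamma(\mathcal{R} - \mathcal{I})$, which is precisely the (commented-out) dissipator characterization with the coherent term carried along unchanged.

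For the forward direction I would start from $\mathcal{L} \in \algL$, fix a representation whose jump operators satisfy $\sum_{\mu}L_{\mu}^{\dagger}L_{\mu} = \Gamma\mathbb{I}$, and use this constraint to collapse the anticommutator: $-\frac{1}{2}\{\sum_{\mu}L_{\mu}^{\dagger}L_{\mu}, \rho\} = -\frac{1}{2}\{\Gamma\mathbb{I}, \rho\} = -\Gamma\rho$. Defining $K_{\mu} = L_{\mu}/\sqrt{\Gamma}$, I would check $\sum_{\mu}K_{\mu}^{\dagger}K_{\mu} = \mathbb{I}$, so that $\mathcal{R} := \sum_{\mu}K_{\mu}\bullet K_{\mu}^{\dagger}$ is manifestly completely positive and trace preserving, and then rewrite the dissipator as $\Gamma\sum_{\mu}K_{\mu}\rho K_{\mu}^{\dagger} - \Gamma\rho = \Gamma(\mathcal{R} - \mathcal{I})(\rho)$. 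Adding back $-i\mathrm{ad}_{H}$ yields the claimed form. For the backward direction I would simply reverse this: given $\mathcal{L} = -i\mathrm{ad}_{H} + \Gamma(\mathcal{R} - \mathcal{I})$ with $\mathcal{R} \in \mathbf{CPTP}$, pick any Kraus representation $\mathcal{R} = \sum_{\mu}K_{\mu}\bullet K_{\mu}^{\dagger}$ (which exists and obeys $\sum_{\mu}K_{\mu}^{\dagger}K_{\mu} = \mathbb{I}$ by trace preservation), set $L_{\mu} = \sqrt{\Gamma}K_{\mu}$, and verify $\sum_{\mu}L_{\mu}^{\dagger}L_{\mu} = \Gamma\sum_{\mu}K_{\mu}^{\dagger}K_{\mu} = \Gamma\mathbb{I}$. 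Substituting shows $\Gamma(\mathcal{R} - \mathcal{I})$ is exactly the Lindblad dissipator for these $L_{\mu}$, so $\mathcal{L}$ admits a representation obeying the constraint and thus lies in $\algL$.

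There is essentially no hard step here; the lemma is a one-line algebraic identity dressed up with correct bookkeeping. The only points requiring care are (i) invoking non-uniqueness of the operator-sum representation correctly — I only need the existence of a single Kraus representation satisfying $\sum_{\mu}K_{\mu}^{\dagger}K_{\mu} = \mathbb{I}$, which holds for every CPTP map — and (ii) the mild edge case $\Gamma = 0$, where all jump operators vanish, $\mathcal{L}$ reduces to purely unitary generation $-i\mathrm{ad}_{H}$, and the choice of $\mathcal{R}$ is immaterial. I would handle the latter by restricting to $\Gamma \in \mathbb{R}_{+}\setminus\{0\}$ in the dissipative rescaling, or by treating it as a trivial separate case.
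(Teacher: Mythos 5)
Your proposal is correct and follows essentially the same route as the paper's proof: both directions proceed by the rescaling $K_{\mu} = L_{\mu}/\sqrt{\Gamma}$ (respectively $L_{\mu} = \sqrt{\Gamma}K_{\mu}$), collapsing the anticommutator via the constraint $\sum_{\mu}L_{\mu}^{\dagger}L_{\mu} = \Gamma\mathbb{I}$ and verifying trace preservation of $\mathcal{R}$. Your explicit handling of the $\Gamma = 0$ edge case is a small refinement the paper leaves implicit, but it does not constitute a different argument.
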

\begin{proof}
    Let $\mathcal{L} \in \algL$, choosing the representation of $\mathcal{L}$ where the jump operators satisfy equation~\eqref{eq:algconst2} we find
    \begin{align}
        \mathcal{L}(\rho) &= -i [H, \rho] + \sum_{\mu=1}^{m}\left((L_{\mu} \rho L_{\mu}^{\dagger} - \frac{1}{2}\{L_{\mu}^{\dagger}L_{\mu}, \rho \}\right) = -i[H , \rho ] + \sum_{\mu=1}^{m}L_{\mu}\rho L_{\mu}^{\dagger} - \frac{\Gamma}{2}\{\mathbb{I}, \rho\} \nonumber \\ 
                          &= -i [H , \rho ] + \Gamma\left(\mathcal{R} - \mathcal{I}\right),
    \end{align}
    where we have defined the map $\mathcal{R}(\rho) = \sum_{\mu=1}^{m}K_{\mu} \rho K_{\mu}^{\dagger} = \frac{1}{\Gamma} \sum_{\mu=1}^{m}L_{\mu} \rho L_{\mu}^{\dagger}$ and $K_{\mu} = L_{\mu} / \sqrt{\Gamma}$. Note,
    \begin{equation}
        \sum_{\mu=1}^{m}K_{\mu}^{\dagger}K_{\mu} = \frac{1}{\Gamma}\sum_{\mu=1}^{m}L_{\mu}^{\dagger}L_{\mu} = \mathbb{I}.
    \end{equation}
    Thus $\mathcal{R} \in \mathbf{CPTP}$, implying the forward direction.

    For the backwards direction let $\mathcal{L} = -i\mathrm{ad}_{H} + \Gamma\left(\mathcal{R} - \mathcal{I}\right)$ with $\mathcal{R} \in \mathbf{CPTP}$. Choosing a Kraus representation for $\mathcal{R}$ we find
    \begin{align}
        \mathcal{L}(\rho) &= -i [H , \rho ] + \Gamma\left(\sum_{\mu=1}^{m}K_{\mu}\rho K_{\mu}^{\dagger} - \rho  \right) = -i[H, \rho ] + \sum_{\mu=1}^{m}\Gamma K_{\mu}\rho K_{\mu}^{\dagger} - \frac{\Gamma}{2}\{\mathbb{I}, \rho \} \nonumber \\
                          &= -i [H, \rho ] + \sum_{\mu=1}^{m} L_{\mu} \rho L_{\mu}^{\dagger} - \frac{1}{2}\{\Gamma \mathbb{I}, \rho \},
    \end{align}
    where we have defined $L_{\mu} = \sqrt{\Gamma} K_{\mu}$. The above equation is manifestly a Lindbladian dissipator with jump operators satisfying 
    \begin{equation}
        \sum_{\mu=1}^{m}L_{\mu}^{\dagger}L_{\mu} = \Gamma \sum_{\mu=1}^{m} K_{\mu}^{\dagger}K_{\mu} = \Gamma \mathbb{I},
    \end{equation}
    thus $\mathcal{L} \in \algL$.
\end{proof}
Lemma~\ref{lm:altchar} allows a simple physical characterize Lindbladians in $\algL$ as those which are a sum of a Hamiltonian portion and a dissipator which can be expressed as proportional to a difference between applying a quantum channel and not.
\section{Proof of Theorem 13.}
\label{app:B}

In this section we provide a proof of the well known result that events which occur with holding times $t_{h}^{(i)}$ where $\mathrm{Pr}(t_{h}^{(i)} \leq \tau) = 1 - e^{-\Gamma \tau}$ have a counting process which is Poissonian. To begin we define the random variables $N(T)$ and $S^{(n)}$ according to our holding time random variables. Define the total waiting time for $n$ events to occur to be  
\begin{equation}
    S^{(n)} = \begin{cases}\sum_{i=1}^{n}t_{h}^{(i)} \text{ for } n > 0\\ 0  \text{ for } n = 0 \end{cases}.
\end{equation}
From $S^{(n)}$ we can define the total number of events occurring with-in a time interval $T$ to be 
\begin{equation}
    \label{eq:countingprocess}
    N(T) = \sup\{n : S^{(n)} \leq T\} = \sum_{n=0}^{\infty}\bm{1}\{S_{n} \leq T\}
\end{equation}
where $\bm{1}$ denotes the indicator function, returning one if event $S_{n} \leq T$ occurs and zero otherwise.

\begin{lemma}
    \label{lm:ntos}
    For any $T > 0$ and $n\in\mathbb{N}$, 
    \begin{equation}
        \mathrm{Pr}(N(T) = n) = \mathrm{Pr}(S^{(n)} \leq T) - \mathrm{Pr}(S^{(n+1)} \leq T)
    \end{equation}
\end{lemma}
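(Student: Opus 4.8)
The plan is to identify the event $\{N(T) = n\}$ with a difference of two nested events and then exploit the monotonicity of the partial sums. The starting observation is that, because each holding time $t_{h}^{(i)}$ is nonnegative, the sequence of waiting times is nondecreasing, $S^{(0)} \le S^{(1)} \le S^{(2)} \le \cdots$, so that $\{S^{(n+1)} \le T\} \subseteq \{S^{(n)} \le T\}$ for every $n$. For the exponential holding times at hand each increment is strictly positive with probability one, which additionally guarantees $S^{(n)} \to \infty$ almost surely, so the supremum defining $N(T)$ is finite and attained and the random variable is well defined.

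First I would unpack the definition $N(T) = \sup\{n : S^{(n)} \le T\}$ to obtain the set identity
\begin{equation}
    \{N(T) = n\} = \{S^{(n)} \le T\} \cap \{S^{(n+1)} > T\}.
\end{equation}
The containment $\subseteq$ holds because $N(T) = n$ forces $S^{(n)} \le T$, so that $n$ lies in the set whose supremum is taken, while $S^{(n+1)} > T$, since otherwise the supremum would be at least $n+1$; the reverse containment follows because $S^{(n)} \le T < S^{(n+1)}$ together with the monotonicity of the $S^{(k)}$ forces $n$ to be the largest index with $S^{(k)} \le T$.

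Next I would convert this set identity into the stated probability identity. Writing $\{S^{(n+1)} > T\}$ as the complement of $\{S^{(n+1)} \le T\}$ and using the nesting $\{S^{(n+1)} \le T\} \subseteq \{S^{(n)} \le T\}$, I would conclude
\begin{equation}
    \mathrm{Pr}(N(T) = n) = \mathrm{Pr}\big(S^{(n)} \le T,\, S^{(n+1)} > T\big) = \mathrm{Pr}(S^{(n)} \le T) - \mathrm{Pr}(S^{(n+1)} \le T),
\end{equation}
where the final equality is the elementary rule $\mathrm{Pr}(A \setminus B) = \mathrm{Pr}(A) - \mathrm{Pr}(B)$ valid for nested events $B \subseteq A$. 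This also handles the edge case $n=0$ automatically, since $S^{(0)} = 0 \le T$ gives $\mathrm{Pr}(S^{(0)} \le T) = 1$ and recovers the survival probability $\mathrm{Pr}(N(T) = 0) = 1 - \mathrm{Pr}(S^{(1)} \le T)$.

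The argument is essentially bookkeeping, so I do not anticipate a serious analytic obstacle; the only point demanding care is the set identity in the first display, where one must justify both inclusions directly from the supremum definition and lean on the nonnegativity of the holding times to ensure the two events are genuinely nested before subtracting their probabilities.
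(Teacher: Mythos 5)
Your proof is correct and takes essentially the same route as the paper's: both reduce $\{N(T)=n\}$ to a difference of nested events using the duality $\{N(T)\geq n\} = \{S^{(n)} \leq T\}$ together with monotonicity of the partial sums. The only cosmetic difference is that you build the set identity directly in terms of the $S^{(n)}$ and subtract nested events, whereas the paper decomposes $\{N(T)=n\}$ in terms of $N(T)$ (via an inclusion--exclusion step) and invokes the duality at the end; your version is, if anything, slightly tidier and additionally records the almost-sure finiteness of $N(T)$.
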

\begin{proof}
    \begin{align}
        \mathrm{Pr}(N(T) = n) &= \mathrm{Pr}(\{N(T) \geq n\} \cap \{N(T) < n +1\}) \nonumber \\
                              &= \mathrm{Pr}(N(T) \geq n ) + \mathrm{Pr}(N(T) < n + 1) - \mathrm{Pr}(\{N(T) \geq n\} \cup \{N(T) < n +1\}) \nonumber \\
                              &= \mathrm{Pr}(N(T) \geq n ) - \mathrm{Pr}(N(T) \geq  n + 1) \nonumber \\
                              &= \mathrm{Pr}(S^{(n)} \leq T) - \mathrm{Pr}(S^{(n+1)} \leq T)
    \end{align}
    where the last line follows from the fact that the event $\{N(T) \geq n\}$ occurs only when the $n$th event occurs before some time $T$. Thus events $\{N(T) \geq n \}$ and  $\{S^{(n)} \leq T\}$ occur together.
\end{proof}

\begin{theorem}
    \label{thm:pois}
    Let $N(T)$ be defined according to equation~\eqref{eq:countingprocess} and $\mathrm{Pr}(t_{h}^{(i)} \leq \tau) = 1 - e^{-\Gamma \tau}$ for $\Gamma > 0$. Then $N(T) \sim \mathrm{Pois}(\Gamma T) = e^{-\Gamma T} \frac{(\Gamma T)^{N}}{N!}$.
\end{theorem}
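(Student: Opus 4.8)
The plan is to combine Lemma~\ref{lm:ntos}, which reduces the law of $N(T)$ to the distribution of the partial sums $S^{(n)}$, with an explicit computation of that distribution. Since $S^{(n)} = \sum_{i=1}^{n} t_{h}^{(i)}$ is a sum of $n$ independent exponential random variables of rate $\Gamma$, the key intermediate fact is that $S^{(n)}$ follows an Erlang (Gamma) law with density $f_{S^{(n)}}(s) = \frac{\Gamma^{n} s^{n-1}}{(n-1)!} e^{-\Gamma s}$ for $s \ge 0$ and $n \ge 1$.

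First I would establish this density by induction on $n$. The base case $n=1$ is just the exponential density $f_{S^{(1)}}(s) = \Gamma e^{-\Gamma s}$, obtained by differentiating $\mathrm{Pr}(t_{h}^{(1)} \le \tau) = 1 - e^{-\Gamma \tau}$. For the inductive step I would write $S^{(n+1)} = S^{(n)} + t_{h}^{(n+1)}$ with the two summands independent and convolve, $f_{S^{(n+1)}}(s) = \int_{0}^{s} f_{S^{(n)}}(u)\, \Gamma e^{-\Gamma (s-u)}\, du$. Substituting the inductive hypothesis, the exponentials combine to $e^{-\Gamma s}$, the remaining integral $\int_{0}^{s} u^{n-1}\, du = s^{n}/n$ produces the correct prefactor, and the Erlang form propagates.

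With the density in hand, I would apply Lemma~\ref{lm:ntos} to write, for $n \ge 1$, $\mathrm{Pr}(N(T) = n) = \mathrm{Pr}(S^{(n)} \le T) - \mathrm{Pr}(S^{(n+1)} \le T) = \mathrm{Pr}(S^{(n)} \le T < S^{(n+1)})$, the last equality holding because $S^{(n+1)} \ge S^{(n)}$ almost surely so the events are nested. Conditioning on $S^{(n)} = s$ and using independence together with the exponential survival function $\mathrm{Pr}(t_{h}^{(n+1)} > T - s) = e^{-\Gamma(T-s)}$ turns this into $\int_{0}^{T} \frac{\Gamma^{n} s^{n-1}}{(n-1)!} e^{-\Gamma s}\, e^{-\Gamma (T-s)}\, ds$. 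The $s$-dependent exponentials cancel, leaving $\frac{\Gamma^{n} e^{-\Gamma T}}{(n-1)!} \int_{0}^{T} s^{n-1}\, ds = e^{-\Gamma T} \frac{(\Gamma T)^{n}}{n!}$, exactly the Poisson mass function. The case $n = 0$ I would handle directly, since $\mathrm{Pr}(N(T) = 0) = \mathrm{Pr}(S^{(1)} > T) = e^{-\Gamma T}$ matches the formula.

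The main obstacle is the inductive computation of the Erlang density and the careful bookkeeping in the convolution integral; everything afterward collapses to a single integration once the density is known. A secondary subtlety worth flagging is the measure-theoretic justification for conditioning on $S^{(n)}$ and for rewriting the difference of cumulative distribution functions as the joint-event probability $\mathrm{Pr}(S^{(n)} \le T < S^{(n+1)})$, which is legitimate precisely because of the almost-sure nesting $S^{(n)} \le S^{(n+1)}$ noted above.
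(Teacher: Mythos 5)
Your proof is correct, but it takes a different route from the paper's. The paper also starts from Lemma~\ref{lm:ntos}, but instead of computing the density of $S^{(n)}$ it works entirely with survival functions: it conditions on the \emph{first} holding time to derive a renewal-type recurrence $G_{n}(T) = e^{-\Gamma T} + \int_{0}^{T}\Gamma e^{-\Gamma t}G_{n-1}(T-t)\,dt$ for $G_{n}(T)=\mathrm{Pr}(S^{(n)}>T)$, then verifies the ansatz $G_{n}(T)=\sum_{a=0}^{n-1}e^{-\Gamma T}(\Gamma T)^{a}/a!$ satisfies it, and finally takes the difference $G_{n+1}(T)-G_{n}(T)$ to read off the Poisson mass function. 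Your argument instead conditions on the \emph{last} arrival before $T$: you first establish by convolution induction that $S^{(n)}$ has the Erlang density $\Gamma^{n}s^{n-1}e^{-\Gamma s}/(n-1)!$, and then evaluate $\mathrm{Pr}(S^{(n)}\leq T < S^{(n+1)})$ as a single integral in which the exponentials cancel. The two are of comparable length, but they trade different things: your route yields the explicit waiting-time (Erlang) distribution as a reusable intermediate fact and the Poisson law emerges constructively from one integral, whereas the paper's route avoids densities altogether at the cost of an ansatz that must be guessed rather than derived (and whose validity as an induction rests, implicitly, on the recurrence plus the base case determining $G_{n}$ uniquely). One small point in your favor: you treat the $n=0$ case explicitly, which the paper's verification glosses over since its recurrence derivation is stated for $n\geq 1$; one small point to tighten on your side is that both the convolution step and the conditioning step use the mutual independence of the holding times $t_{h}^{(i)}$, which is implicit in the setup but worth stating when you invoke it.
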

\begin{proof}
    Consider the probability of waiting longer than $T$ for $n$ events to occur, thus we can write
    \begin{align}
        \mathrm{Pr}(S_{n} > T) &= \mathrm{Pr}(t_{h}^{(1)} > T) + \int_{0}^{T} \mathrm{Pr}\left(\sum_{i=2}^{n}t_{h}^{(i)} > T - t\right)\mathrm{Pr}(t_{h} = t)dt \nonumber \\
                               &= e^{-\Gamma T} + \int_{0}^{T} \Gamma e^{-\Gamma t} \mathrm{Pr}\left(\sum_{i=2}^{n}t_{h}^{(i)} > T - t\right) dt.
    \end{align}
    Defining $G_{n}(T) = \mathrm{Pr}(S_{n} > T)$, we can write the above equation as
    \begin{align}
        \label{eq:recur}
        G_{n}(T) &= e^{-\Gamma T} + \int_{0}^{T} \Gamma e^{-\Gamma t} G_{n-1}(T-t) dt.
    \end{align}
    Proposing the ansatz $G_{n}(T) = \sum_{a = 0}^{n-1}e^{-\Gamma T}\frac{(\Gamma T)^{a}}{a!}$, we can show it satisfies the recurrence relation above by plugging it into the right hand side of the above equation. We find
    \begin{align}
        G_{n}(T) &=  e^{-\Gamma T} + \int_{0}^{T}\Gamma e^{\Gamma t}\left(\sum_{a = 0}^{n-2}e^{-\Gamma (T - t)}\frac{(\Gamma (T-t))^{a}}{a!}\right) dt = e^{-\Gamma T} + \Gamma e^{-\Gamma T}\sum_{a = 0}^{n-2}\frac{\Gamma^{a}}{a!}\int_{0}^{T}(T - t)^{a}dt \nonumber \\
                 &= e^{-\Gamma T} + e^{-\Gamma T}\sum_{a = 0}^{n-2}\frac{\Gamma^{a+1}}{(a+1)!}T^{a+1} = \sum_{a = 0}^{n-1}e^{-\Gamma T}\frac{(\Gamma T)^{a}}{a!},
    \end{align}
    thus $G_{n}(T) = \sum_{a = 0}^{n-1}e^{-\Gamma T}\frac{(\Gamma T)^{a}}{a!}$ is solution to equation~\eqref{eq:recur}.

    Applying Lemma~\ref{lm:ntos} we find
    \begin{align}
        \mathrm{Pr}(N(T)=n) &= (1 - \mathrm{Pr}(S_{n} > T)) - (1 - \mathrm{Pr}(S_{n+1} > T)) = G_{n+1}(T) - G_{n}(T) \nonumber \\
                            &= e^{-\Gamma T} \frac{(\Gamma T)^{n}}{n!} = \mathrm{Pois}(\Gamma T)
    \end{align}
    thus concluding the proof.
\end{proof}

\end{document}